\documentclass{article}

\usepackage{graphicx}

\usepackage{amsthm}
\usepackage{amsmath}
\usepackage{amssymb}
\usepackage{stmaryrd} 
\usepackage[noblocks]{authblk} 
\usepackage{hyperref} 

\usepackage{tikz}
\usetikzlibrary{automata, positioning}
\usepackage[all]{xy}

\usepackage{bussproofs} 

\newcommand{\MSet}{{M\text{-}\mathbf{Set}}}
\newcommand{\MNom}{{M\text{-}\mathbf{Nom}}}

\newcommand{\CoAlg}{\ensuremath{\mathrm{CoAlg}}}

\newcommand{\permset}{{\perm\text{-}\mathbf{Set}}}
\newcommand{\permnom}{{\perm\text{-}\mathbf{Nom}}}
\newcommand{\sbset}{{\sb\text{-}\mathbf{Set}}}
\newcommand{\sbnom}{{\sb\text{-}\mathbf{Nom}}}

\newcommand{\dom}{\mathrm{dom}}
\DeclareMathOperator{\ev}{ev} 
\DeclareMathOperator{\orb}{orb}

\newcommand{\pow}{\mathcal{P}}
\newcommand{\perm}{\mathsf{Pm}}
\renewcommand{\sb}{\mathsf{Sb}}
\newcommand{\supp}{\mathsf{supp}}
\newcommand{\swap}[2]{(#1 \, #2)}

\newcommand{\atoms}{\mathbb{A}}
\newcommand{\separated}{\mathop{\#}}
\newcommand{\fsto}{\mathrel{\to_\text{\upshape fs}}}
\newcommand{\permto}{\mathrel{\to_\text{\upshape fs}^{\perm}}}
\newcommand{\sbto}{\mathrel{\to_\text{\upshape fs}^{\sb}}}
\newcommand{\mto}{\mathrel{\to_\text{\upshape fs}^{M}}}
\newcommand{\wandto}{\mathrel{-\kern-0.3em{\ast}}}

\newcommand{\sepprod}{\mathop{\ast}}

\DeclareMathOperator{\id}{id}
\newcommand{\sa}{B_*}

\newcommand{\Put}{\mathop{\text{\upshape \sf Put}}}
\newcommand{\Pop}{\text{\upshape \sf Pop}}

\theoremstyle{plain}
\newtheorem{theorem}{Theorem}
\newtheorem{proposition}{Proposition}
\newtheorem{lemma}{Lemma}
\newtheorem{corollary}{Corollary}

\theoremstyle{definition}
\newtheorem{definition}{Definition}

\theoremstyle{remark}
\newtheorem{remark}{Remark}
\newtheorem{example}{Example}

\title{Separation and Renaming in Nominal Sets} 
\author{Joshua Moerman\thanks{This research has been partially funded by the ERC AdG project 787914 FRAPPANT.}}
\affil{RWTH Aachen, Germany, \quad \href{mailto:joshua@cs.rwth-aachen.de}{\texttt{joshua@cs.rwth-aachen.de}}}
\author{Jurriaan Rot\thanks{Also affiliated to Radboud University. The research of this author received funding from the European Union's Horizon 2020 research and innovation programme under the Marie Sk\l{}odowska-Curie Grant Agreement No. 795119.}}
\affil{University College London, UK, \quad \href{mailto:jrot@cs.ru.nl} {\texttt{jrot@cs.ru.nl}}}

\begin{document}
\maketitle
\begin{abstract} 
Nominal sets provide a foundation for reasoning about names.
They are used primarily in syntax with binders, but also, e.g., to model automata over infinite alphabets.
In this paper, nominal sets are related to \emph{nominal renaming sets}, which involve arbitrary substitutions rather than permutations, through a categorical adjunction.
In particular, the left adjoint relates the separated product of nominal sets to the Cartesian product of nominal renaming sets.
Based on these results, we define the new notion of \emph{separated nominal automata}.
These automata can be exponentially smaller than classical nominal automata, if the semantics is closed under substitutions.
\end{abstract}

\section{Introduction}

Nominal sets are abstract sets which allow one to reason over sets with names, in terms
of permutations and symmetries. 
Since their introduction in computer science~\cite{GabbayP99}, they
have been widely used for implementing and reasoning over syntax with binders~\cite{pitts13}. 
Further, nominal techniques have been related to 
computability theory~\cite{BojanczykKLT13} and automata theory~\cite{BojanczykKL14},
where they provide an elegant means of studying languages over infinite alphabets. 
This embeds nominal techniques in a broader setting of \emph{symmetry aware computation}~\cite{Pitts16}.

Gabbay, one of the pioneers of nominal techniques described a variation 
on the theme: \emph{nominal renaming sets}~\cite{Gabbay07,GabbayH08}.
Nominal renaming sets are equipped with a monoid action of arbitrary (possibly non-injective) substitution of names,
in contrast to nominal sets, which only involve a group action of permutations. 

In this paper, we systematically relate nominal renaming sets to nominal sets.
We start by establishing a categorical adjunction (Section~\ref{sec:adjunction}):
\begin{equation*}\label{eq:adj-intro}
\xymatrix{
    \permnom \ar@/^10pt/[rr]^F
    &
    \bot
    &
    \sbnom \ar@/^10pt/[ll]^U
}, 
\end{equation*}
where $\permnom$ is the usual category of nominal sets and $\sbnom$ the category of nominal renaming sets.
The right adjoint $U$ simply forgets the action of non-injective substitutions. The left adjoint $F$
freely extends a nominal set with elements representing the application of such substitutions. 
For instance, $F$ maps the nominal set $\atoms^{(*)}$ of all words consisting of distinct atoms 
to the nominal renaming set $\atoms^*$ consisting of all words over the atoms. 

In fact, the latter follows from one of the main results of this paper:
 $F$ 
maps the \emph{separated product} $X \sepprod Y$ of nominal sets to the Cartesian product of nominal renaming sets. 
Additionally, under certain conditions, $U$ maps the exponent to the \emph{magic wand}
$X \wandto Y$, which is the right adjoint of the separated product. 
The separated product consists of those pairs whose elements have disjoint supports. 
This is relevant for name abstraction~\cite{pitts13},
and has also been studied in the setting of presheaf categories, aimed towards separation logic~\cite{DBLP:journals/jfp/OHearn03}. 

We apply these connections between nominal sets and renaming sets in the context of automata theory. 
Nominal automata are expressively equivalent to the more classical register automata~\cite{Bojanczyk18},
and have appealing properties that register automata lack, such as unique minimal automata. 
However, moving from register automata to nominal automata can lead to an exponential blow-up in the number of states.%
\footnote{Here `number of states' refers to the number of orbits in the state space.} 

As a motivating example,
we consider a language modelling an $n$-bounded FIFO queue.
The input alphabet is given by $\Sigma = \{ \Put(a) \mid a \in \atoms \} \cup \{ \Pop \}$,
and the output alphabet by $O = \atoms \cup \{ \perp \}$ (here, $\perp$ is a \emph{null} value).
The (generalised) language $L_n \colon \Sigma^* \to O$ maps a sequence of queue
operations to the resulting top element when starting from the empty queue, or
to $\perp$ if this is undefined.
The language $L_n$ can be recognised by a nominal (Moore) automaton, but this requires an exponential
number of states in $n$, as the automaton distinguishes internally between all possible equalities
among elements in the queue~\cite{MSSKS17}. 

Based on the observation that $L_n$ is closed under substitutions, 
we can come up with a \emph{linear} automata-theoretic representation.
To this end, we define the new notion of \emph{separated nominal automaton},
where the transition function is only defined for pairs of states and letters with a disjoint support (Section~\ref{sec:automata}).
Using the aforementioned categorical framework,
we can go back and forth between languages from separated automata and languages which are closed under substitutions.
In the FIFO example, the separated automaton obtained from the original nominal automaton has only $n+1$ states,
thus dramatically reducing the number of states.
We expect that such a reduction is useful in many applications, such as automata learning~\cite{MSSKS17}.

\section{Monoid actions and nominal sets}

In order to capture both the standard notion of nominal sets~\cite{pitts13} 
and sets with more general renaming actions~\cite{GabbayH08}, we start by defining monoid actions.

\begin{definition}
Let $(M, \cdot, 1)$ be a monoid. An \emph{$M$-set} is a set $X$ together with a 
function ${\cdot} \colon M \times X \to X$ such that $1 \cdot x = x$ 
and $m \cdot (n \cdot x) = (m \cdot n) \cdot x$ for all $m, n \in M$ and $x \in X$.
The function ${\cdot}$ is called an \emph{$M$-action} and $m \cdot x$ is often written by juxtaposition $m x$.
A function $f \colon X \to Y$ between two $M$-sets is \emph{$M$-equivariant} 
if $m \cdot f(x) = f(m \cdot x)$ for all $m \in M$ and $x \in X$.
The class of $M$-sets together with equivariant maps forms a category $\MSet$.
\end{definition}
Let $\atoms = \{a, b, c, \ldots\}$ 
be a countable infinite set of \emph{atoms}. 
The two main instances of $M$ considered in this paper are the monoid 
$$
\sb = \{m \colon \atoms \to \atoms \mid m(a) \neq a \text{ for finitely many }a\}
$$ 
of all (finite) substitutions (with composition as multiplication), and the monoid
$$
\perm = \{g \in \sb \mid g \text{ is a bijection}\}
$$
of all (finite) permutations. 
Since $\perm$ is a submonoid of $\sb$, any $\sb$-set is also a
$\perm$-set; and any $\sb$-equivariant map is also $\perm$-equivariant. 
This gives rise to a forgetful functor
\begin{equation}\label{eq:forget}
    U \colon \sbset \to \permset .
\end{equation}

The set $\atoms$ is an $\sb$-set by defining $m \cdot a = m(a)$.
Given an $M$-set $X$, the set $\pow(X)$ of subsets of $X$ is an $M$-set,
with the action defined by direct image. 

For a $\perm$-set $X$, the \emph{orbit} of an element $x$ is the set $\orb(x) = \{g \cdot x \mid g \in \perm \}$.
We say $X$ is \emph{orbit-finite} if the set $\{\orb(x) \mid x \in X\}$ is finite.

For any monoid $M$, the category $\MSet$ is symmetric monoidal closed. 
The product of two $M$-sets is given by the Cartesian product, with the 
action defined pointwise: $m \cdot (x, y) = (m \cdot x, m \cdot y)$.
In $\MSet$, the exponent $X \to^{M} Y$ is given by the set
$\{f \colon M \times X \to Y \mid f \text{ is equivariant}\}$.%
\footnote{If we write a regular arrow $\to$, then we mean a map in the category. Exponent objects will always be denoted by annotated arrows.}
The action on such an $f \colon M \times X \to Y$ is defined by
$(m \cdot f)(n,x) = f(mn,x)$.
A good introduction to the construction of the exponent is given by Simmons~\cite{Simmons}.
If $M$ is a group, a simpler description of the exponent may be given, 
carried by the set of all functions $f \colon X \to Y$,
with the action given by $(g \cdot f)(x) = g \cdot f(g^{-1} \cdot x)$.

\subsection{Nominal $M$-sets}

The notion of \emph{nominal} set is usually defined w.r.t.\ a $\perm$-action. Here,
we use the generalisation to $\sb$-actions from~\cite{GabbayH08}.
Throughout this section, let $M$ denote a submonoid of $\sb$. 

\begin{definition}
    Let $X$ be an $M$-set, and $x \in X$ an element. A set
    $C \subset \atoms$ is an \emph{($M$)-support} of $x$ if for all $m_1, m_2 \in M$ s.t.\ $m_1|_C = m_2|_C$ we have $m_1 x = m_2 x$.
    An $M$-set $X$ is called \emph{nominal} if every element $x$ has a finite $M$-support.
\end{definition}

Nominal $M$-sets and equivariant maps form a full subcategory of $\MSet$,
denoted by $\MNom$.
The $M$-set $\atoms$ of atoms is nominal. 
The powerset $\pow(X)$ of a nominal set is not nominal in general; the restriction to
finitely supported elements is. 

If $M$ is a group, then the notion of support can be simplified by using inverses.
To see this, first note that, given elements $g_1, g_2 \in M$, 
$g_1|_C = g_2|_C$ can equivalently be written as $g_1 g_2^{-1}|_C = \id|_C$.
Second, the statement $x g_1 = x g_2$ can be expressed as $x g_1 g_2^{-1} = x$.
Hence, $C$ is a support iff $g|_C = \id_C$ implies $gx = x$ for all $g$,
which is the standard definition for nominal sets over a group~\cite{BojanczykKL14,pitts13}.
Surprisingly, a similar characterisation also holds for $\sb$-sets~\cite{GabbayH08}.
Moreover, recall that every $\sb$-set is also a $\perm$-set; the associated
notions of support coincide on nominal $\sb$-sets, as shown
by the following result.
In particular, this means
that the forgetful functor~\eqref{eq:forget} restricts to $U \colon \sbnom \to \permnom$. 

\begin{lemma}\cite[Theorem~4.8]{Gabbay07}
    \label{lem:GM-support}
    Let $X$ be a nominal $\sb$-set, $x \in X$,
    and $C \subset \atoms$. 
    Then $C$ is an $\sb$-support of $x$ iff it is a $\perm$-support of $x$. 
\end{lemma}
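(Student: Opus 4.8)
The forward implication is immediate and requires no use of nominality: since $\perm \subseteq \sb$, the defining condition for an $\sb$-support quantifies over strictly more maps than that for a $\perm$-support, so any $C$ that is an $\sb$-support of $x$ is in particular a $\perm$-support (restrict the condition to $m_1, m_2 \in \perm$). The whole content is therefore the converse. Here I would first record two routine closure properties, provable directly from the definition: both $\sb$-supports and $\perm$-supports are closed under supersets and under finite unions (if $m_1, m_2$ agree on $C' \supseteq C$, they agree on $C$). Combined with the standard fact that $x$, being nominal, has a \emph{least} finite $\perm$-support $\supp(x)$ contained in every $\perm$-support, this lets me reduce the converse to the single statement: \emph{$\supp(x)$ is an $\sb$-support}. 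Indeed, once that is known, any $\perm$-support $C \supseteq \supp(x)$ is an $\sb$-support by superset-closure.

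To prove that $\supp(x)$ is an $\sb$-support, fix $m_1, m_2 \in \sb$ with $m_1|_{\supp(x)} = m_2|_{\supp(x)}$; I must show $m_1 x = m_2 x$. The plan is to exploit that $x$ \emph{also} has a finite $\sb$-support, say $S \supseteq \supp(x)$ (take the union of $\supp(x)$ with any finite $\sb$-support given by nominality). The substitutions $m_1, m_2$ are forced to agree on $\supp(x)$ but may differ on the finitely many atoms of $S \setminus \supp(x)$. The idea is to erase this discrepancy by renaming $S \setminus \supp(x)$ to fresh atoms. Concretely, choose distinct atoms $c_a$, one for each $a \in S \setminus \supp(x)$, lying outside $S$ and outside the (finite) sets $\{b \mid m_i(b) \neq b\}$, and let $\pi = \prod_{a \in S \setminus \supp(x)} \swap{a}{c_a} \in \perm$. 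Then $\pi$ fixes $\supp(x)$ pointwise, so $\pi x = x$, whence $m_i x = m_i(\pi x) = (m_i \pi) x$ by the action axioms. Now I would compute $(m_i \pi)|_S$: on $\supp(x)$ it equals $m_i|_{\supp(x)}$, and on $a \in S \setminus \supp(x)$ it sends $a \mapsto m_i(c_a) = c_a$ since $c_a$ is fresh for $m_i$. Both descriptions are independent of $i$, so $(m_1\pi)|_S = (m_2\pi)|_S$, and since $S$ is an $\sb$-support we conclude $(m_1\pi) x = (m_2\pi) x$, that is, $m_1 x = m_2 x$.

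The step I would expect to be the conceptual obstacle is precisely the one the construction is designed to avoid: a general $m \in \sb$ may be non-injective on the support, so there is no permutation agreeing with it there, and one cannot simply match $m$ against a permutation and cancel as in the group case. The trick sidesteps this entirely, because non-injectivity of $m_1\pi$ and $m_2\pi$ is irrelevant once they are seen to agree on $S$: the $\sb$-support property only demands equality of the two maps on $S$, never injectivity. The two genuinely load-bearing inputs are thus structural rather than computational, and I would flag them explicitly when writing the details: the existence of a \emph{least} $\perm$-support (so that a permutation fixing $\supp(x)$ fixes $x$) and the existence of a \emph{finite} $\sb$-support (so that only finitely many coordinates must be freshened), together with the infinitude of $\atoms$ guaranteeing enough fresh names.
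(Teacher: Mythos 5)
Your proof is correct and takes essentially the same route as the paper's: both arguments pick a finite $\sb$-support $S$ of $x$ (available by nominality), compose $m_1,m_2$ with a permutation that fixes the given $\perm$-support pointwise (hence fixes $x$) while sending the remaining atoms of $S$ to atoms on which $m_1$ and $m_2$ agree, and then invoke the $\sb$-support property of $S$ to conclude $m_1x = m_2x$. The only difference is cosmetic: you first reduce to $C = \supp(x)$ via least supports and superset-closure, whereas the paper runs the identical renaming argument directly on the given $\perm$-support $C$, so that reduction step is not needed.
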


\begin{remark}
It is not true that any $\perm$-support is an $\sb$-support.
The condition that $X$ is nominal, in the above lemma, is crucial.
Let $X = \atoms + 1$ and define the following $\sb$-action: $m \cdot a = m(a)$ if $m$ is injective, $m \cdot a = \ast$ if $m$ is non-injective, and $m \cdot \ast = \ast$.
This is a well-defined $\sb$-set, but is \emph{not nominal}.
Now consider $U(X)$, this is the $\perm$-set $\atoms + 1$ with the natural action, which is a \emph{nominal} $\perm$-set!
In particular, as a $\perm$-set each element has a finite support, but as a $\sb$-set the supports are infinite.

This counterexample is similar to the ``exploding nominal sets'' in \cite{Gabbay07}, but even worse behaved.
We like to call them \emph{nuclear sets}, since an element will collapse when hit by a non-injective map, no matter how far away the non-injectivity occurs.
\end{remark}

For $M \in \{\sb, \perm\}$,
any element $x \in X$ of a nominal $M$-set $X$ has a least finite support (w.r.t.\ set inclusion).
We denote the least finite support of an element $x \in X$ by $\supp(x)$.
Note that by Lemma~\ref{lem:GM-support}, the set $\supp(x)$ is independent
of whether a nominal $\sb$-set $X$ is viewed as an $\sb$-set or a $\perm$-set.
The \emph{dimension} of $X$ is given by $\dim(X) = \max \{|\supp(x)| \mid x \in X \}$,
where $|\supp(x)|$ is the cardinality of $\supp(x)$. 

We list some basic properties of nominal $M$-sets, which have
known counterparts for the case that $M$ is a group~\cite{BojanczykKL14},
and when $M=\sb$~\cite{GabbayH08}.

\begin{lemma}
    \label{lem:transfer-support}
    Let $X$ be an $M$-nominal set.
    If $C$ supports an element $x \in X$,
    then $m \cdot C$ 
    supports $m \cdot x$ for all $m \in M$.
    Moreover, any $g \in \perm$ preserves least supports: $g \cdot \supp(x) = \supp(g x)$.
\end{lemma}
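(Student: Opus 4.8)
The plan is to prove the two claims in sequence, using the first to bootstrap the second. For the first claim, I would argue directly from the definition of $M$-support. Assume $C$ supports $x$, and take any $n_1, n_2 \in M$ with $n_1|_{mC} = n_2|_{mC}$; the goal is $n_1 (m x) = n_2 (m x)$. Rewriting via the action axiom, this is $(n_1 m) x = (n_2 m) x$, so by the support hypothesis on $x$ it suffices to check $(n_1 m)|_C = (n_2 m)|_C$. For $a \in C$ we have $(n_i m)(a) = n_i(m(a))$ with $m(a) \in m C$, so the two sides agree because $n_1$ and $n_2$ agree on $mC$. Hence $(n_1 m) x = (n_2 m) x$, which says exactly that $mC$ supports $mx$.

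For the second claim, I would first instantiate the first claim with $g \in \perm$ to obtain that $g \cdot \supp(x)$ supports $gx$; minimality of $\supp(gx)$ then yields $\supp(gx) \subseteq g \cdot \supp(x)$. The reverse inclusion is where invertibility of permutations is essential: applying the first claim to $g^{-1} \in \perm$ and the element $gx$ shows that $g^{-1} \cdot \supp(gx)$ supports $g^{-1}(gx) = x$, so minimality of $\supp(x)$ gives $\supp(x) \subseteq g^{-1} \cdot \supp(gx)$. Acting by $g$ on both sides produces $g \cdot \supp(x) \subseteq \supp(gx)$, and combining the two inclusions gives the desired equality.

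I expect no deep obstacle here: both parts follow by unfolding the support definition and the action axioms. The one point requiring care is the composition convention---keeping track that $n \cdot (m \cdot x) = (nm) \cdot x$ with $(nm)(a) = n(m(a))$---so that the rewriting in the first claim lands the relevant atoms inside $mC$ rather than $C$. It is also worth emphasising why the second claim genuinely needs $g$ to be invertible: for a non-injective $m \in M$ the first claim only gives $\supp(mx) \subseteq m \cdot \supp(x)$, and the reverse inclusion can fail, so least supports are preserved by $\perm$ but not by general substitutions.
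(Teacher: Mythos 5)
Your proof is correct. The paper itself offers no proof of this lemma---it is listed among basic properties with known counterparts in the cited literature---and your argument is precisely the standard one: unfold the definition of $M$-support (with the composition convention $(n_i m)(a) = n_i(m(a))$ handled correctly) for the first part, then the two-sided minimality argument via $g^{-1}$ for the second; your closing remark that for a general non-injective $m \in M$ one only gets $\supp(mx) \subseteq m \cdot \supp(x)$ also matches the paper's own comment that the equality fails for $M = \sb$.
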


The latter equality does not hold in general for a monoid $M$.
For instance, the `exploding' nominal renaming sets \cite{GabbayH08} give counterexamples for $M = \sb$.

\begin{lemma}
    Given $M$-nominal sets $X, Y$ and a map $f \colon X \to Y$,
    if $f$ is $M$-equivariant and $C$ supports an element $x \in X$,
    then $C$ supports $f(x)$.
\end{lemma}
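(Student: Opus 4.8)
The plan is to prove directly from the definition of support that equivariant maps transport supports. Let $X, Y$ be $M$-nominal sets, let $f \colon X \to Y$ be $M$-equivariant, and suppose $C$ supports $x \in X$. I want to show $C$ supports $f(x)$, i.e.\ that for all $m_1, m_2 \in M$ with $m_1|_C = m_2|_C$, we have $m_1 \cdot f(x) = m_2 \cdot f(x)$.

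First I would unpack the hypothesis: fix arbitrary $m_1, m_2 \in M$ agreeing on $C$, that is $m_1|_C = m_2|_C$. Since $C$ supports $x$, the definition of $M$-support immediately gives $m_1 \cdot x = m_2 \cdot x$. The key step is then to push $f$ through using equivariance. By $M$-equivariance of $f$ we have $m_i \cdot f(x) = f(m_i \cdot x)$ for $i = 1, 2$. Combining these, $m_1 \cdot f(x) = f(m_1 \cdot x) = f(m_2 \cdot x) = m_2 \cdot f(x)$, where the middle equality is exactly the consequence of $C$ supporting $x$. Since $m_1, m_2$ were arbitrary subject to agreeing on $C$, this establishes that $C$ is an $M$-support of $f(x)$.

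This is essentially a one-line computation once the definitions are expanded, so there is no real obstacle; the whole content is the interplay between the defining property of support (agreement on $C$ forces equal action) and equivariance (the action commutes with $f$). The only point requiring minor care is that the statement asks only that $C$ supports $f(x)$, not that $C$ is the \emph{least} support — so I need not argue anything about minimality, and in particular $\supp(f(x)) \subseteq C$ follows as a corollary whenever $C = \supp(x)$, but I would not need that for the statement as given.

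Let snote that this argument works uniformly for any submonoid $M$ of $\sb$ and does not require $M$ to be a group, since it never uses inverses; it relies solely on the support condition phrased in terms of agreement of $m_1, m_2$ on $C$, which is exactly how the definition is stated in the excerpt.
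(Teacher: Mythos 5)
Your proof is correct: unpacking the monoid-based definition of support and pushing $f$ through via equivariance is exactly the intended argument, which the paper omits as routine (no proof of this lemma appears in the main text or appendix). Your observation that the argument needs no inverses and works uniformly for any submonoid $M \subseteq \sb$ matches the paper's setting precisely.
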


The category $\MNom$ is symmetric monoidal closed, 
with the product inherited from $\MSet$, thus simply given by Cartesian product.
The exponent is given by the restriction of the exponent $X \to^{M} Y$ in $\MSet$ to the set of finitely
supported functions, denoted by $X \mto Y$.
This is similar to the exponents of nominal sets with 01-substitutions from \cite{Pitts14}.

\begin{remark}
In~\cite{GabbayH08} a different presentation of the exponent in $\MNom$ is given, based on
a certain extension of partial functions. We prefer the previous characterisation,
as it is derived in a straightforward way from the exponent in $\MSet$.
\end{remark}

\subsection{Separated product}

\begin{definition}
Two elements $x, y \in X$ of a $\perm$-nominal set are called \emph{separated}, denoted by $x \separated y$, if there are disjoint sets $C_1, C_2 \subset \atoms$ such that $C_1$ supports $x$ and $C_2$ supports $y$.
The \emph{separated product} of $\perm$-nominal sets $X$ and $Y$ is defined as
$$ X \sepprod Y = \{ (x, y) \mid x \separated y \}. $$
\end{definition}

We extend the separated product to the \emph{separated power}, defined by $X^{(0)} = 1$ and $X^{(n+1)} = X^{(n)} \sepprod X$, and the \emph{set of separated words} $X^{(\ast)} = \bigcup_i X^{(i)}$.
The separated product is an equivariant subset $X \sepprod Y \subseteq X \times Y$.
Consequently, we have equivariant projection maps $X \sepprod Y \to X$ and $X \sepprod Y \to Y$.

\begin{example}
Two finite sets $C, D \subset \atoms$ are separated precisely when they are disjoint.
An important example is the set $\atoms^{(\ast)}$ of separated words over the atoms: it consists of those words where all letters are distinct.
\end{example}

The separated product gives rise to another symmetric closed monoidal structure on $\permnom$, 
with $1$ as unit, and the exponential object given by \emph{magic wand} $X \wandto Y$.
An explicit characterisation of $X \wandto Y$ is not needed in the remainder of this paper,
but for a complete presentation we briefly recall the description from~\cite{Schoepp06} (see also~\cite{pitts13} and~\cite{Clouston13}).
First, define a $\perm$-action on the set of partial functions $f \colon X \rightharpoonup Y$
by $(g \cdot f)(x) =  g \cdot f(g^{-1} \cdot x)$ if $f(g^{-1} \cdot x)$ is defined. 
Now, such a partial function $f \colon X \rightharpoonup Y$ is called \emph{separating}
if $f$ is finitely supported, $f(x)$ is defined iff $f \separated x$, and
$\supp(f) = \bigcup_{x \in \dom(f)} \supp(f(x)) \setminus \supp(x)$. 
Finally, $X \wandto Y = \{f \colon X \rightharpoonup Y \mid f \text{ is separating}\}$. 
See~\cite{Schoepp06} for a proof and explanation.

\begin{remark}\label{rem:atom-abstr}
The special case $\atoms \wandto Y$ coincides with $[\atoms]Y$, the set of \emph{name abstractions}~\cite{pitts13}. 
The latter is generalised to $[X]Y$ in~\cite{gabbay2002fm}. In~\cite{Clouston13} it is shown
that the coincidence $[X]Y \cong (X \wandto Y)$ only holds under strong assumptions
(including that $X$ is single-orbit).
\end{remark}

\begin{remark}
An analogue of the separated product does not seem to exist for nominal $\sb$-sets.
For instance, consider the set $\atoms \times \atoms$.
As a $\perm$-set, it has four equivariant subsets: $\emptyset, \{(a, a) \mid a \in \atoms\}$, $\atoms \sepprod \atoms$, and $\atoms \times \atoms$.
However, the set $\atoms \sepprod \atoms$ is not an equivariant subset when considering $\atoms \times \atoms$ as an $\sb$-set.
\end{remark}

\section{A monoidal construction from $\perm$-sets to $\sb$-sets}\label{sec:adjunction}

In this section, we provide a free construction, extending nominal $\perm$-sets to nominal $\sb$-sets.
We use this as a basis to relate the separated product and exponent (in $\permnom$)
to the product and exponent in $\sbnom$. The main results are:
\begin{enumerate}
\item Theorem~\ref{thm:adjunction}:
the forgetful functor $U \colon \sbnom \to \permnom$ has a left adjoint $F$;

\item Theorem~\ref{thm:monoidal}:
this $F$ is monoidal: it maps separated products to products;

\item Theorem~\ref{thm:exponent-separated} and Corollary~\ref{cor:exponent-iso}:
$U$ maps the exponent object in $\sbnom$ to the right adjoint $\wandto$ of the separated product, if the domain has dimension smaller or equal to $1$.
\end{enumerate}
Together, these results form the categorical infrastructure to relate 
nominal languages to separated languages and automata in Section~\ref{sec:automata}.

\begin{definition}\label{def:left-adjoint}
    Given a $\perm$-nominal set $X$,
    we define a nominal $\sb$-set $F(X)$ as follows.
    Define the set
    \[ F(X) = \{ (m, x) \mid m \in \sb, x \in X \} / _\sim, \]
    where $\sim$ is the least equivalence relation containing:
    \begin{align}
    \label{eq:rel1} (m, g x) &\sim (m g, x), \\
    \label{eq:rel2} (m, x)   &\sim (m', x) \quad \text{ if } m|_C = m'|_C \text{ for a }\perm\text{-support } C \text{ of } x,
    \end{align}
    for all $x \in X$, $m, m' \in \sb$ and $g \in \perm$.
    Note that $mg = m \circ g$, i.e., simply the monoid operation of $\sb$.
    The equivalence class of a pair $(m, x)$ is denoted by $[m, x]$.
    We define an $\sb$-action on $F(X)$ as $n \cdot [m, x] = [n m, x]$.
\end{definition}
Well-definedness is proved as part of Proposition~\ref{prop:functor} below.
Informally, an equivalence class $[m,x] \in F(X)$ behaves ``as if $m$ acted on $x$''.
The first equation~(\ref{eq:rel1}) ensures compatibility with the $\perm$-action on $x$, and the second equation~(\ref{eq:rel2}) ensures that $[m,x]$ only depends the relevant part of $m$.
\begin{example}
Here are a few examples of the application of $F$. We do not give direct proofs, but the first 
two will be treated more systematically later in this section (see Corollary~\ref{prop:An-iso}).
For the third, note that $\atoms \times \atoms$ consist of two orbits, $\atoms \sepprod \atoms$ and the diagonal $\{ (a, a) \mid a \in \atoms \}$.
\begin{itemize}
\item $F(\atoms) \cong \atoms$.
\item $F(\atoms^{(*)}) \cong \atoms^*$.
\item $F(\atoms \times \atoms) \cong \atoms^2 + \atoms$.
\end{itemize}
\end{example}

The following characterisation of $\sim$ is useful in proofs.
(This lemma is proven in the Appendix.)
\begin{lemma}\label{lm:sim}
We have
 $(m_1, x_1) \sim (m_2, x_2)$ iff there 
is a permutation $g \in \perm$ such that $g x_1 = x_2$ and $m_1|_C = m_2 g|_C$, for $C$ some $\perm$-support of $x_1$. 
\end{lemma}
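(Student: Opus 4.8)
The plan is to prove the biconditional of Lemma~\ref{lm:sim} in both directions, with the forward direction being the substantive one. The right-to-left direction is a routine verification: given $g \in \perm$ with $g x_1 = x_2$ and $m_1|_C = m_2 g|_C$ for some $\perm$-support $C$ of $x_1$, I would show $(m_1, x_1) \sim (m_2, x_2)$ by chaining the two generating relations. Concretely, since $C$ supports $x_1$ and $m_1|_C = (m_2 g)|_C$, relation~(\ref{eq:rel2}) gives $(m_1, x_1) \sim (m_2 g, x_1)$; then relation~(\ref{eq:rel1}) applied with the permutation $g$ gives $(m_2 g, x_1) \sim (m_2, g x_1) = (m_2, x_2)$. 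Composing these yields the claim.

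For the left-to-right direction I would define a relation $R$ on pairs by declaring $(m_1, x_1) \mathrel{R} (m_2, x_2)$ exactly when such a witnessing permutation $g$ exists, and then show that $R$ is an equivalence relation containing both generators~(\ref{eq:rel1}) and~(\ref{eq:rel2}). Since $\sim$ is by definition the \emph{least} such equivalence relation, this gives $\sim\,\subseteq R$, which is precisely the forward implication. Checking that $R$ contains the generators is direct: for~(\ref{eq:rel1}), the pair $(m, g x) \mathrel{R} (mg, x)$ is witnessed by $g^{-1}$ after relating supports via Lemma~\ref{lem:transfer-support} (note $\supp(gx) = g\cdot\supp(x)$); for~(\ref{eq:rel2}), the identity permutation $g = \id$ witnesses $R$ since $m|_C = m'|_C$ directly. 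Reflexivity and symmetry of $R$ are easy (take $g = \id$ and $g^{-1}$ respectively, using that $\perm$ is a group).

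The main obstacle will be \emph{transitivity} of $R$, and getting the support bookkeeping right. Suppose $(m_1, x_1) \mathrel{R} (m_2, x_2)$ via $g$ on support $C$ of $x_1$, and $(m_2, x_2) \mathrel{R} (m_3, x_3)$ via $h$ on support $D$ of $x_2$. The natural candidate witness is $hg$, giving $(hg) x_1 = h x_2 = x_3$, and I want $m_1|_{C'} = m_3 (hg)|_{C'}$ for some support $C'$ of $x_1$. The delicate point is that the two hypotheses are stated relative to \emph{different} supports, $C$ for $x_1$ and $D$ for $x_2$, so I must transport everything to a common support. Here I would use that any finite support can be enlarged to a larger finite support (supports are closed under supersets), and use Lemma~\ref{lem:transfer-support} to relate $\supp(x_2)$ and $g\cdot\supp(x_1)$. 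I expect to take $C' = C \cup g^{-1}(D)$ (or the union of $C$ with the preimage of $D$ under $g$) as the common support of $x_1$, then verify that on $C'$ the restriction of $m_1$ agrees with $m_3 hg$ by composing the two given restriction equalities and tracking how $g$ moves $C'$ into $D$. The care needed to ensure the restrictions compose correctly across the $g$-relabelling is the crux of the argument; once the supports are aligned, the equation $m_1|_{C'} = m_3(hg)|_{C'}$ follows by substituting the first hypothesis into the second.
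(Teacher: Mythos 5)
Your overall strategy is exactly the one the paper uses: the easy direction by chaining relation~\eqref{eq:rel2} with relation~\eqref{eq:rel1}, and the hard direction by showing that the witnessed relation is itself an equivalence relation containing both generators, so that it contains the \emph{least} such equivalence relation $\sim$. The generator inclusions, reflexivity and symmetry are fine. However, your transitivity argument --- which you correctly identify as the crux --- goes in the wrong direction and would fail as stated. You propose to \emph{enlarge} the support to $C' = C \cup g^{-1}(D)$. While $C'$ is indeed a support of $x_1$ (supports are closed under supersets), the required equality $m_1|_{C'} = m_3 (hg)|_{C'}$ simply does not follow: the hypotheses constrain $m_1$ only on $C$, and $m_2$ versus $m_3 h$ only on $D$, so on the new points of $C'$ there is no information at all. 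Concretely, take $X = \atoms$, $x_1 = x_2 = x_3 = a$, $g = h = \id$, $C = \{a\}$, $D = \{a, b\}$, $m_3 = m_2$, and choose $m_1$ with $m_1(a) = m_2(a)$ but $m_1(b) \neq m_2(b)$. Both hypotheses hold, yet $m_1|_{C'} \neq m_3(hg)|_{C'}$ for your $C' = \{a,b\}$. (The conclusion $(m_1,x_1) \mathrel{R} (m_3,x_3)$ is still true here, but via the \emph{smaller} support $\{a\}$; it is your verification that breaks, because restriction equalities do not propagate to supersets.)

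The fix --- and this is what the paper does --- is to \emph{shrink} rather than enlarge: pass to least supports. Since $\supp(x_1) \subseteq C$ and $\supp(x_2) \subseteq D$, the hypotheses restrict to $m_1|_{\supp(x_1)} = m_2 g|_{\supp(x_1)}$ and $m_2|_{\supp(x_2)} = m_3 h|_{\supp(x_2)}$. Now Lemma~\ref{lem:transfer-support} gives $\supp(x_2) = g \cdot \supp(x_1)$, so for every $a \in \supp(x_1)$ we have $g(a) \in \supp(x_2)$, and the two equalities genuinely chain: $m_1(a) = m_2(g(a)) = m_3(h(g(a)))$. Thus $hg$ is a witness with support $\supp(x_1)$, and transitivity holds. (Alternatively, $C \cap g^{-1}(D)$ works, using that intersections of supports are supports in $\permnom$ --- but the point stands: you must intersect or minimise, never take unions, because the hypotheses only give you agreement of the substitutions on the \emph{given} sets.)
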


\begin{remark}
The first relation~(\ref{eq:rel1}) in Definition~\ref{def:left-adjoint} comes from the construction of ``extension of scalars'' in commutative algebra \cite{AtiyahM69}.
In that context, one has a ring homomorphism $f \colon A \to B$ and an $A$-module $M$ and wishes to obtain a $B$-module.
This is constructed by the tensor product $B \otimes_A M$ and it is here that the relation $(b, am) \sim (ba, m)$ is used ($B$ is a right $A$-module via $f$).
\end{remark}

\begin{proposition}\label{prop:functor}
    The construction $F$ in Definition~\ref{def:left-adjoint} extends to a functor
    \[ F \colon \permnom \to \sbnom \,, \]
    defined on an equivariant map $f \colon X \to Y$ by $F(f)([m, x]) = [m, f(x)] \in F(Y)$.
\end{proposition}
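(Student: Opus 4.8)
The plan is to verify, in sequence, that the $\sb$-action on $F(X)$ is well defined, that $F(X)$ is nominal, that $F$ is well defined on morphisms, and finally that it preserves identities and composition. The central obstacle throughout is well-definedness: because elements of $F(X)$ are equivalence classes under $\sim$, every construction phrased on representatives $(m,x)$ must be shown to respect $\sim$. To manage this cleanly I would first check that $\sim$ is compatible with left multiplication, i.e.\ that $(m,x)\sim(m',x')$ implies $(nm,x)\sim(nm',x')$ for all $n\in\sb$; this single fact makes the action $n\cdot[m,x]=[nm,x]$ well defined, and it suffices to check it for the two generating relations~\eqref{eq:rel1} and~\eqref{eq:rel2}. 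For~\eqref{eq:rel1} this is immediate since $(nm)g=n(mg)$; for~\eqref{eq:rel2}, if $m|_C=m'|_C$ on a support $C$ of $x$ then $nm|_C=nm'|_C$ as well, so the relation is preserved. The $\sb$-action laws $1\cdot[m,x]=[m,x]$ and $n'\cdot(n\cdot[m,x])=(n'n)\cdot[m,x]$ then follow directly from associativity and unitality of composition in $\sb$.

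Next I would show $F(X)$ is nominal by exhibiting a finite $\sb$-support for each $[m,x]$. The natural candidate is $m\cdot C$, where $C$ is a finite $\perm$-support of $x$ in $X$; intuitively, since $[m,x]$ ``behaves as if $m$ acted on $x$,'' only the values of $m$ on $C$ matter, and these land in $m\cdot C$. Concretely I would take $n_1,n_2\in\sb$ agreeing on $m\cdot C$ and argue $n_1 m$ and $n_2 m$ agree on $C$, whence $[n_1 m,x]=[n_2 m,x]$ by relation~\eqref{eq:rel2}, giving $n_1\cdot[m,x]=n_2\cdot[m,x]$. The mild subtlety here is that relation~\eqref{eq:rel2} is stated for substitutions agreeing on a $\perm$-support of $x$, so I must confirm that the composite $n_i m$ restricted to $C$ depends only on $m|_C$ and $n_i|_{m\cdot C}$; this is a routine chase using $(n_i m)(a)=n_i(m(a))$ and $m(a)\in m\cdot C$ for $a\in C$.

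For the functorial action on a $\perm$-equivariant $f\colon X\to Y$, I would again reduce to the generators. Setting $F(f)([m,x])=[m,f(x)]$, well-definedness means $(m,x)\sim(m',x')$ implies $(m,f(x))\sim(m',f(x'))$. Rather than induct on the generating relations directly, the cleanest route is to invoke the explicit characterisation of $\sim$ from Lemma~\ref{lm:sim}: there exists $g\in\perm$ with $gx=x'$ and $m|_C=m'g|_C$ for some $\perm$-support $C$ of $x$. Since $f$ is equivariant, $g\,f(x)=f(gx)=f(x')$, and since supports are preserved under equivariant maps, $C$ also supports $f(x)$; applying Lemma~\ref{lm:sim} in the reverse direction with the same $g$ and $C$ yields $(m,f(x))\sim(m',f(x'))$. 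One then checks $F(f)$ is $\sb$-equivariant, which is immediate: $F(f)(n\cdot[m,x])=F(f)([nm,x])=[nm,f(x)]=n\cdot[m,f(x)]=n\cdot F(f)([m,x])$.

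Finally, preservation of identities and composition is a direct computation on representatives: $F(\id_X)([m,x])=[m,\id_X(x)]=[m,x]$, and $F(g\circ f)([m,x])=[m,g(f(x))]=F(g)([m,f(x)])=F(g)(F(f)([m,x]))$. No $\sim$-subtleties arise here because both sides are evaluated on the same representative. I expect the main effort to lie in the nominality argument, since it is the only step requiring genuine interaction between the monoid structure of $\sb$ and the support data of $X$; the remaining verifications are formal once well-definedness of the action and of $F(f)$ has been secured via Lemma~\ref{lm:sim}.
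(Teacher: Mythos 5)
Your proposal is correct and follows essentially the same route as the paper: the same decomposition into well-definedness of the action, nominality witnessed by $m \cdot C$ (via pre-composition with $m$), well-definedness of $F(f)$ via Lemma~\ref{lm:sim} together with equivariant maps preserving supports, and routine verification of the functor laws. The only cosmetic difference is that you verify compatibility of $\sim$ with left multiplication on the two generating relations, whereas the paper deduces it from Lemma~\ref{lm:sim} by post-composition; both are valid and equally short.
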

\begin{proof}
We first prove well-definedness and then the functoriality.

\textbf{$F(X)$ is an $\sb$-set.}
To this end we check that the $\sb$-action is well-defined.
Let $[m_1, x_1] = [m_2, x_2] \in F(X)$ and let $m \in \sb$.
By Lemma~\ref{lm:sim}, there is some permutation $g$ such that  $g x_1 = x_2$
and $m_1|_C = m_2 g|_C$ for some support $C$ of $x_1$.
By post-composition with $m$ we get $m m_1|_C = m m_2 g|_C$, which means (again by the lemma)
that $[m m_1, x_1] = [m m_2, x_2]$.
Thus $m [m_1, x_1] = m [m_2, x_2]$, which concludes well-definedness.

For associativity and unitality of the $\sb$-action, we simply note that it is
directly defined by left multiplication of $\sb$ which is associative and unital.
This concludes that $F(X)$ is an $\sb$-set.

\textbf{$F(X)$ is a nominal $\sb$ set.}
Given an element $[m, x] \in F(X)$ and a $\perm$-support $C$ of $x$, we will prove that $m \cdot C$ 
is an $\sb$-support for $[m, x]$.
Suppose that we have $m_1, m_2 \in \sb$ such that $m_1|_{m\cdot C} = m_2|_{m\cdot C}$.
By pre-composition with $m$ we get $m_1 m|_C = m_2 m|_C$ and this leads us to conclude $[m_1 m, x] = [m_2 m, x]$.
So $m_1 [m,x] = m_2 [m, x]$ as required.

\textbf{Functoriality.}
Let $f \colon X \to Y$ be a $\perm$-equivariant map.
To see that $F(f)$ is well-defined consider $[m_1, x_1] = [m_2, x_2]$.
By Lemma~\ref{lm:sim}, there is a permutation $g$ such that  $g x_1 = x_2$
and $m_1|_C = m_2 g|_C$ for some support $C$ of $x_1$.
Applying $F(f)$ gives on one hand $[m_1, f(x_1)]$ and
on the other hand $[m_2, f(x_2)] = [m_2, f(g x_1)] = [m_2, g f(x_1)] = [m_2 g, f(x_1)]$
(we used equivariance in the second step). Since $m_1|_C = m_2 g|_C$ and $f$ preserves supports
we have $[m_2 g, f(x_1)] = [m_1, f(x_1)]$.

For $\sb$-equivariance we consider both $n \cdot F(f)([m, x]) = n [m, f(x)] = [n m, f(x)]$ and $F(f)(n \cdot [m, x]) = F(f)([nm, x]) = [nm, f(x)]$.
This shows that $n F(f)([m, x]) = F(f)(n [m, x])$ and concludes that we have a map $F(f) \colon F(X) \to F(Y)$.

The fact that $F$ preserves the identity function and composition follows from the definition directly.
\end{proof}

\begin{theorem}
    \label{thm:adjunction}
    The functor $F$ is left adjoint to $U$:
    $$
    \xymatrix{
        \permnom \ar@/^10pt/[rr]^F
        &
        \bot
        &
        \sbnom \ar@/^10pt/[ll]^U
    }
    $$
\end{theorem}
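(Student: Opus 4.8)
The plan is to establish the adjunction by exhibiting a natural bijection
$$\Hom_{\sbnom}(F(X), Z) \cong \Hom_{\permnom}(X, U(Z))$$
for every $\perm$-nominal set $X$ and every $\sb$-nominal set $Z$. In the forward direction, given a $\perm$-equivariant map $f \colon X \to U(Z)$, I define $\hat{f} \colon F(X) \to Z$ by $\hat{f}([m,x]) = m \cdot f(x)$, where the action on the right is the $\sb$-action of $Z$. In the backward direction, given an $\sb$-equivariant $h \colon F(X) \to Z$, I define $\check{h} \colon X \to U(Z)$ by $\check{h}(x) = h([1,x])$. The bulk of the work is to show that $\hat{f}$ is well defined and $\sb$-equivariant, that $\check{h}$ is $\perm$-equivariant, and that the two assignments are mutually inverse; naturality in $X$ and $Z$ is then a routine diagram chase.

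For well-definedness of $\hat{f}$, suppose $[m_1, x_1] = [m_2, x_2]$. By Lemma~\ref{lm:sim} there is $g \in \perm$ with $g x_1 = x_2$ and $m_1|_C = m_2 g|_C$ for some $\perm$-support $C$ of $x_1$. Using $\perm$-equivariance of $f$ and the action axioms I rewrite $m_2 \cdot f(x_2) = m_2 \cdot f(g x_1) = (m_2 g) \cdot f(x_1)$, so it remains to see that $m_1 \cdot f(x_1) = (m_2 g) \cdot f(x_1)$. This is exactly where the support machinery enters: since equivariant maps preserve supports, $C$ is a $\perm$-support of $f(x_1)$, and by Lemma~\ref{lem:GM-support} --- which applies precisely because $Z$ is nominal as an $\sb$-set --- the set $C$ is then also an $\sb$-support of $f(x_1)$. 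As $m_1|_C = m_2 g|_C$, the defining property of an $\sb$-support yields $m_1 \cdot f(x_1) = (m_2 g) \cdot f(x_1)$, as needed. The $\sb$-equivariance of $\hat{f}$ is immediate from $\hat{f}(n \cdot [m,x]) = \hat{f}([nm,x]) = (nm)\cdot f(x) = n \cdot \hat{f}([m,x])$, and $\perm$-equivariance of $\check{h}$ follows by rewriting $\check{h}(g x) = h([1, gx]) = h([g,x]) = h(g \cdot [1,x]) = g \cdot \check{h}(x)$, using relation~\eqref{eq:rel1} and that an $\sb$-equivariant map is in particular $\perm$-equivariant.

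Finally I check the two round-trips. Starting from $f$, I compute $\check{\hat{f}}(x) = \hat{f}([1,x]) = 1 \cdot f(x) = f(x)$; starting from $h$, I compute $\hat{\check{h}}([m,x]) = m \cdot \check{h}(x) = m \cdot h([1,x]) = h(m \cdot [1,x]) = h([m,x])$, using $\sb$-equivariance of $h$ in the penultimate step. This gives the claimed bijection, and its evident compatibility with pre- and post-composition by equivariant maps establishes naturality.

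The main obstacle I anticipate is the well-definedness of $\hat{f}$: it is the only step where the hypotheses genuinely interact, requiring both that equivariant maps preserve supports and the coincidence of $\perm$- and $\sb$-supports on nominal $\sb$-sets. The remark on nuclear sets in the excerpt shows that this coincidence can fail for arbitrary (non-nominal) $\sb$-sets, so restricting the codomain to $\sbnom$ rather than $\sbset$ is essential for the very definition of $\hat{f}$ to make sense.
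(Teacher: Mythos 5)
Your proposal is correct and takes essentially the same route as the paper: your $\hat{f}$ and $\check{h}$ are precisely the paper's $f^\sharp$ and $(-)^\flat$, and the crucial well-definedness step uses the identical combination of Lemma~\ref{lm:sim}, preservation of supports by equivariant maps, and Lemma~\ref{lem:GM-support}. The only cosmetic difference is that you phrase the adjunction as a natural hom-set bijection, whereas the paper verifies the universal property of the unit $\eta_X(x) = [\id, x]$; in particular your round-trip computation $\hat{\check{h}} = h$ is exactly the paper's uniqueness argument for $f^\sharp$.
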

\begin{proof}
    We show that, for every nominal set $X$, there is a map $\eta_X \colon X \to UF(X)$
    with the necessary universal property: for every $\perm$-equivariant $f \colon X \to U(Y)$ there
    is a unique $\sb$-equivariant map $f^\sharp \colon FX \to Y$ such that $U(f^\sharp) \circ \eta_X = f$.
    Define $\eta_X$ by $\eta_X(x) = [\id,x]$.
    This is equivariant: $g \cdot \eta_X(x) = g [\id,x] = [g,x] = [\id, gx] = \eta_X(gx)$.
    Now, for $f \colon X \rightarrow U(Y)$,
    define $f^{\sharp}([m,x]) = m \cdot f(x)$ for $x \in X$ and $m \in \sb$.
    Then $U(f^\sharp) \circ \eta_X(x) = f^\sharp([\id, x]) = \id \cdot f(x) = f(x)$.

    To show that $f^{\sharp}$ is well-defined, consider $[m_1, x_1] = [m_2, x_2]$
    (we have to prove that $m_1 \cdot f(x_1) = m_2 \cdot f(x_2)$).
    By Lemma~\ref{lm:sim}, there is a $g \in \perm$
    such that $g x_1 = x_2$ and $m_2 g |_C = m_1|_C$ for a $\perm$-support $C$ of $x_1$.
    Now $C$ is also a $\perm$-support for $f(x)$ and hence it is an $\sb$-support of $f(x)$ (Lemma~\ref{lem:GM-support}).
    We conclude that $m_2 \cdot f(x_2) = m_2 \cdot f(g x_1) = m_2 g \cdot f(x_1)  = m_1 \cdot f(x_1)$
    (we use $\perm$-equivariance in the one but last step and $\sb$-support in the last step).
    Finally, $\sb$-equivariance of $f^\sharp$ and uniqueness are straightforward calculations (see Appendix).
\end{proof}

The counit $\epsilon \colon FU(Y) \to Y$ is defined by $\epsilon([m, x]) = m \cdot x$.
For the inverse of $-^{\sharp}$, let $g \colon F(X) \to Y$ be an $\sb$-equivariant map; 
then $g^{\flat} \colon X \to U(Y)$ is given by $g^{\flat}(x) = g([\id, x])$.
Note that the unit $\eta$ is a $\perm$-equivariant map, hence it preserves supports (i.e., any support of $x$ also supports $[\id, x]$).
This also means that if $C$ is a support of $x$, then $m \cdot C$ is a support of $[m, x]$ (by Lemma~\ref{lem:transfer-support}).

\subsection{On (separated) products}

The functor $F$ not only preserves coproducts, being a left adjoint, but 
it also maps the separated product to products:

\begin{theorem}
    \label{thm:monoidal}
    The functor $F$ is strong monoidal, from the monoidal category $(\permset, \sepprod, 1)$ to $(\sbset, \times, 1)$.    In particular, the map $p$ given by
    \[ p = \langle F(\pi_1), F(\pi_2) \rangle \colon F(X \sepprod Y) \to F(X) \times F(Y) \]
    is an isomorphism, natural in $X$ and $Y$.
\end{theorem}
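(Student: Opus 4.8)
The plan is to show that $p$ is a natural isomorphism; strongness then follows once we also record that the unit is preserved, $F(1)\cong 1$ (the unique element of $1$ has empty support, so relation~(\ref{eq:rel2}) collapses all pairs $(m,\ast)$ and $F(1)$ is a singleton), and that the coherence diagrams commute, which is routine. Naturality of $p$ in $X$ and $Y$ is immediate from functoriality of $F$ and naturality of $\pi_1,\pi_2$, so the heart of the matter is to prove that the $\sb$-equivariant map $p$, which by Proposition~\ref{prop:functor} satisfies $p([m,(x,y)]) = ([m,x],[m,y])$, is a bijection. Throughout I will use Lemma~\ref{lm:sim} to reason about equality of equivalence classes, together with the fact from Lemma~\ref{lem:transfer-support} that $g\cdot\supp(z)=\supp(gz)$ for $g\in\perm$; I will also use freely that the least support is contained in every support, so the support $C$ appearing in Lemma~\ref{lm:sim} may always be taken to equal $\supp$ of the relevant element.

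For surjectivity I would start from an arbitrary pair $([m_1,x],[m_2,y])\in F(X)\times F(Y)$. Using relation~(\ref{eq:rel1}) in the form $[m_i,z]=[m_i g^{-1},gz]$, I can replace the representatives $x$ and $y$ by $gx$ and $hy$ for suitable $g,h\in\perm$ chosen so that $gx\separated hy$, i.e.\ so that $\supp(gx)$ and $\supp(hy)$ become disjoint (indeed one may take $h=\id$ and exploit that $\atoms$ is infinite to move $\supp(x)$ off $\supp(y)$). Since these two supports are now disjoint, I define a single finite substitution $m\in\sb$ agreeing with $m_1 g^{-1}$ on $\supp(gx)$, with $m_2 h^{-1}$ on $\supp(hy)$, and with the identity elsewhere; consistency is guaranteed precisely by disjointness, and $m\in\sb$ because both substitutions and both supports are finite. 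By relation~(\ref{eq:rel2}) we then get $[m,gx]=[m_1 g^{-1},gx]=[m_1,x]$ and likewise for $y$, so $(gx,hy)\in X\sepprod Y$ and $p([m,(gx,hy)])=([m_1,x],[m_2,y])$.

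For injectivity, suppose $p([m,(x,y)])=p([m',(x',y')])$, i.e.\ $[m,x]=[m',x']$ and $[m,y]=[m',y']$. By Lemma~\ref{lm:sim} there are $g,h\in\perm$ with $gx=x'$, $hy=y'$, $m|_{\supp(x)}=m'g|_{\supp(x)}$ and $m|_{\supp(y)}=m'h|_{\supp(y)}$. The goal is to merge $g$ and $h$ into one permutation $k$ with $kx=x'$ and $ky=y'$ witnessing $[m,(x,y)]=[m',(x',y')]$. Since $(x,y)$ and $(x',y')$ are separated, the least supports $\supp(x),\supp(y)$ are disjoint and so are $\supp(x'),\supp(y')$; moreover $g$ restricts to a bijection $\supp(x)\to\supp(x')$ and $h$ to a bijection $\supp(y)\to\supp(y')$ by Lemma~\ref{lem:transfer-support}. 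Hence the map agreeing with $g$ on $\supp(x)$ and with $h$ on $\supp(y)$ is a well-defined bijection from $\supp(x)\cup\supp(y)$ onto $\supp(x')\cup\supp(y')$, which extends to a finite permutation $k\in\perm$. Then $kx=gx=x'$ and $ky=hy=y'$, so $k(x,y)=(x',y')$, and on $C=\supp(x)\cup\supp(y)$ we have $m|_C=m'k|_C$ by construction; applying Lemma~\ref{lm:sim} once more yields $[m,(x,y)]=[m',(x',y')]$.

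The main obstacle is exactly the merging step appearing in both directions: combining two independent pieces of data (the substitutions $m_1 g^{-1},m_2 h^{-1}$ for surjectivity, the permutations $g,h$ for injectivity) into a single consistent map on $\atoms$. This is where separatedness is indispensable---disjointness of the supports is what makes the two prescriptions compatible---and it is the reason $F$ sends the \emph{separated} product, rather than the ordinary product, to the Cartesian product. The one supporting fact I would invoke is the standard observation that any bijection between two finite subsets of $\atoms$ extends to a finite permutation.
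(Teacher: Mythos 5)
Your proposal is correct and follows essentially the same route as the paper's own proof: establish bijectivity of the equivariant map $p$ via Lemma~\ref{lm:sim}, handling surjectivity by renaming one component to make the supports disjoint and then gluing the two substitutions on the disjoint supports, and handling injectivity by merging the two permutations on the disjoint least supports and extending the resulting finite bijection to a permutation (the paper invokes homogeneity here), with the unit case disposed of by noting $F(1)$ is a singleton. The only cosmetic differences are which component you rename in the surjectivity step and that the paper explicitly cites a lemma that bijective equivariant maps in $\sbnom$ are isomorphisms.
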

\begin{proof}
    We prove that $p$ is an isomorphism.
    It suffices to show that $p$ is injective and surjective (Lemma~\ref{lem:inj-surj-iso}).
    Note that $p([m, (x, y)]) = ([m, x], [m, y])$.

    \textbf{Surjectivity.}
    Let $([m_1, x], [m_2, y])$ be an element of $F(X) \times F(Y)$.
    We take an element $y' \in Y$ such that $y' \separated \supp(x)$ and $y' = gy$ for some $g \in \perm$.
    Now we have an element $(x, y') \in X \sepprod Y$.
    By Lemma~\ref{lem:transfer-support}, we have $\supp(y') = \supp(y)$.
    Define the map
    \[ m(x) = \begin{cases}
        m_1(x)         &\text{if } x \in \supp(x) \\
        m_2(g^{-1}(x)) &\text{if } x \in \supp(y') \\
        x              &\text{otherwise}.
    \end{cases}\]
    (Observe that $\supp(x) \separated \supp(y')$, so the cases are not overlapping.)
    The map $m$ is an element of $\sb$.
    Now consider the element $z = [m, (x, y')] \in F(X \sepprod Y)$.
    Applying $p$ to $z$ gives the element $([m, x], [m, y'])$.
    First, we note that $[m, x] = [m_1, x]$ by the definition of $m$.
    Second, we show that $[m, y'] = [m_2, y]$.
    Observe that $m g|_{\supp(y)} = m_2|_{\supp(y)}$ by definition of $m$.
    Since $\supp(y)$ is a support of $y$, we have $[mg, y] = [m_2, y]$,
    and since $[mg, y] = [m, g y] = [m, y']$ we are done.
    Hence $p([m, (x, y')]) = ([m, x], [m, y']) = ([m_1, x], [m_2, y])$, so $p$ is surjective.

    \textbf{Injectivity.}
    Let $[m_1, (x_1, y_1)]$ and $[m_2, (x_2, y_2)]$ be two elements.
    Suppose that they are mapped to the same element, i.e., $[m_1, x_1] = [m_2, x_2]$ and $[m_1, y_1] = [m_2, y_2]$.
    Then there are permutations $g_x, g_y$ such that $x_2 = g_x x_1$ and $y_2 = g_y y_1$.
    Moreover, let $C=\supp(x_1)$ and $D=\supp(y_1)$; then we have $m_1|_C =  m_2 g_x|_C$ and $m_1|_D =  m_2 g_y|_D$. 
    In order to show the two original elements are equal, we have to provide a single permutation $g$.
    Define for, $z \in C \cup D$,
    \[ g_0(z) = \begin{cases}
        g_x(z) &\text{if } z \in C \\
        g_y(z) &\text{if } z \in D.
    \end{cases} \]
    (Again, $C$ and $D$ are disjoint.)
    The function $g_0$ is injective since the least supports of $x_2$ and $y_2$ are disjoint. Hence $g_0$ defines a local isomorphism from $C \cup D$ to $g_0(C \cup D)$.
    By homogeneity~\cite{pitts13}, the map $g_0$ extends to a permutation $g \in \perm$ with $g(z) = g_x(z)$ for $z \in C$ and $g(z) = g_y(z)$ for $z \in D$.
    In particular we get $(x_2, y_2) = g (x_1, y_1)$.
    We also obtain $m_1|_{C \cup D} = m_2 g|_{C \cup D}$.
    This proves that $[m_1, (x_1, y_1)] = [m_2, (x_2, y_2)]$, and so the map $p$ is injective.

    \textbf{Unit and coherence.}
    To show that $F$ preserves the unit, we note that $[m, 1] = [m', 1]$ for every $m, m' \in \sb$, as the empty set supports $1$ and so $m|_{\emptyset} = m'|_{\emptyset}$ vacuously holds.
    We conclude $F(1)$ is a singleton.
\end{proof}

Since $F$ also preserves coproducts (being a left adjoint), we obtain that 
$F$ maps the set of separated words to the set of all words.

\begin{corollary}\label{cor:sep-words}
    For any $\perm$-nominal set $X$, we have $F(X^{(*)}) \cong (FX)^*$.
\end{corollary}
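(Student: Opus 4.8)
The plan is to reduce the statement to the strong monoidality of $F$ (Theorem~\ref{thm:monoidal}) together with the fact that $F$, being a left adjoint (Theorem~\ref{thm:adjunction}), preserves coproducts.

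First I would observe that both constructions decompose as coproducts. The separated words $X^{(*)} = \bigcup_i X^{(i)}$ form a disjoint union, since separated tuples of distinct lengths are distinct elements; hence $X^{(*)} \cong \coprod_{i \geq 0} X^{(i)}$ in $\permnom$. Likewise $(FX)^* \cong \coprod_{i \geq 0} (FX)^i$ in $\sbnom$, where $(FX)^i$ denotes the $i$-fold Cartesian product.

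Next, since $F$ is a left adjoint it preserves all coproducts, so $F(X^{(*)}) \cong \coprod_{i \geq 0} F(X^{(i)})$. It therefore suffices to establish an isomorphism $F(X^{(i)}) \cong (FX)^i$ for each $i$, compatible with the coproduct injections, and I would prove this by induction on $i$. The base case $F(X^{(0)}) = F(1) \cong 1 = (FX)^0$ is exactly the unit-preservation part of Theorem~\ref{thm:monoidal}. For the inductive step, unfolding the definitions and applying the isomorphism $p$ of Theorem~\ref{thm:monoidal} gives
\[ F(X^{(n+1)}) = F(X^{(n)} \sepprod X) \cong F(X^{(n)}) \times F(X) \cong (FX)^n \times FX = (FX)^{n+1}, \]
where the last isomorphism uses the induction hypothesis.

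The argument is essentially formal once Theorem~\ref{thm:monoidal} is available; the only point requiring care is the justification that $X^{(*)}$ and $(FX)^*$ are genuinely coproducts (rather than mere unions of overlapping pieces), and that $F$ preserves the \emph{infinite} coproduct—the latter being immediate from $F$ being a left adjoint. I do not expect a substantial obstacle: the strong monoidal structure does all the real work, and the corollary merely packages the per-length isomorphisms into a single one ranging over all lengths.
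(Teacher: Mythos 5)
Your proposal is correct and matches the paper's own (very brief) argument: the paper derives the corollary in one line from the fact that $F$ preserves coproducts, being a left adjoint, combined with the strong monoidality of Theorem~\ref{thm:monoidal}. Your write-up merely makes explicit the decomposition $X^{(*)} \cong \coprod_i X^{(i)}$ and the induction giving $F(X^{(i)}) \cong (FX)^i$, which is exactly what the paper leaves implicit.
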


As we will show below, the functor $F$ preserves
the set $\atoms$ of atoms. This is an instance of a more general result
about preservation of one-dimensional objects.

\begin{proposition}
    \label{lem:1dim-iso}
    The functors $F$ and $U$ are equivalences on $\leq 1$-dimensional objects.
    Concretely, for $X \in \permnom$ and $Y \in \sbnom$:
    \begin{enumerate}
    \item If $\dim(X) \leq 1$, then the unit $\eta \colon X \to UF(X)$ is an isomorphism.
    \item If $\dim(Y) \leq 1$, then the co-unit $\epsilon \colon FU(X) \to X$ is an isomorphism.
    \end{enumerate}
\end{proposition}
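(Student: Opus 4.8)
My plan is to treat the two isomorphisms separately: I prove (1) directly, and then obtain (2) as a formal consequence of (1) via one of the triangle identities of the adjunction $F \dashv U$ (Theorem~\ref{thm:adjunction}). Note that in (2) the intended object is $Y \in \sbnom$, so the counit in question is $\epsilon_Y \colon FU(Y) \to Y$.

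For (1), recall $\eta_X(x) = [\id, x]$ and that $\eta_X$ is $\perm$-equivariant, so by Lemma~\ref{lem:inj-surj-iso} it suffices to establish injectivity and surjectivity. Injectivity holds for \emph{every} $X$: if $[\id, x] = [\id, x']$, then Lemma~\ref{lm:sim} gives $g \in \perm$ with $gx = x'$ and $g|_C = \id|_C$ for some $\perm$-support $C$ of $x$; since $g$ fixes $C$ pointwise and $C$ supports $x$, we get $gx = x$, hence $x' = x$. The hypothesis $\dim(X) \leq 1$ enters only in surjectivity. Given $[m, x] \in F(X)$, set $C = \supp(x)$, so $|C| \leq 1$. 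If $C = \emptyset$, relation~(\ref{eq:rel2}) holds vacuously and forces $[m, x] = [\id, x] = \eta_X(x)$. If $C = \{a\}$, I would pick any permutation $g \in \perm$ with $g(a) = m(a)$ (the identity if $m(a) = a$, otherwise the transposition $\swap{a}{m(a)}$); then $m|_C = g|_C$, so relation~(\ref{eq:rel2}) gives $[m, x] = [g, x]$, and relation~(\ref{eq:rel1}) gives $[g, x] = [\id, g x] = \eta_X(g x)$. Hence $\eta_X$ is surjective, and therefore an isomorphism.

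For (2), I would avoid a direct injectivity computation on $\epsilon_Y$ and instead invoke the triangle identity $U(\epsilon_Y) \circ \eta_{U(Y)} = \id_{U(Y)}$. By Lemma~\ref{lem:GM-support} the $\sb$- and $\perm$-supports of elements of $Y$ coincide, so least supports agree and $\dim(U(Y)) = \dim(Y) \leq 1$. Thus part (1) applies to the $\perm$-nominal set $U(Y)$, making $\eta_{U(Y)}$ an isomorphism. The triangle identity then forces $U(\epsilon_Y) = \eta_{U(Y)}^{-1}$, so $U(\epsilon_Y)$ is an isomorphism in $\permnom$; in particular $\epsilon_Y$ is a bijection. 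Since $\epsilon_Y$ is $\sb$-equivariant (being the counit) and the set-theoretic inverse of any $\sb$-equivariant bijection is again $\sb$-equivariant, $\epsilon_Y$ is an isomorphism in $\sbnom$.

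The only step demanding genuine care is surjectivity in (1): the role of $\dim(X) \leq 1$ is precisely that a substitution's effect on a singleton support can always be reproduced by a permutation, so that $[m,x]$ collapses onto the image of $\eta$. This is exactly what fails at dimension $\geq 2$, where a non-injective $m$ may identify two distinct atoms of a support and produce an element of $F(X)$ with no preimage under $\eta$. Everything else is routine, and (2) is obtained from (1) purely formally through the adjunction.
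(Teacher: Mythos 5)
Your proposal is correct, but part (2) follows a genuinely different route from the paper's, so a comparison is in order (and you are right to read the statement of (2) as concerning $\epsilon_Y \colon FU(Y) \to Y$; the $X$ in the paper's statement is a typo). Part (1) is essentially the paper's argument: injectivity is immediate from Lemma~\ref{lm:sim}, and surjectivity rests on the same key point, namely that the effect of $m$ on a support of size at most one can be reproduced by a permutation --- you rewrite $[m,x] = [g,x] = [\id, gx]$ with $g = \swap{a}{m(a)}$, where the paper equivalently rewrites $[m,x] = [mg^{-1}, gx] = [\id, gx]$. For part (2), however, the paper works directly on $\epsilon_Y$: surjectivity by taking $m = \id$, and injectivity by a two-case analysis on support sizes, which relies on the auxiliary Lemma~\ref{lem:1dim-sbnom} (for elements with at most singleton support, the $\sb$-orbit coincides with the $\perm$-orbit). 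You bypass both the case analysis and that lemma: since $\dim(U(Y)) = \dim(Y)$ by Lemma~\ref{lem:GM-support}, part (1) applies to $U(Y)$, and the triangle identity $U(\epsilon_Y) \circ \eta_{U(Y)} = \id_{U(Y)}$ forces $U(\epsilon_Y) = \eta_{U(Y)}^{-1}$, so $\epsilon_Y$ is an $\sb$-equivariant bijection and hence an isomorphism (your direct check that the set-theoretic inverse of an equivariant bijection is equivariant is valid for any monoid action; alternatively you could simply cite Lemma~\ref{lem:inj-surj-iso}, which the paper itself uses for exactly this purpose elsewhere). What your route buys is economy: (2) becomes a purely formal consequence of (1), the adjunction of Theorem~\ref{thm:adjunction}, and the support-coincidence Lemma~\ref{lem:GM-support}, making Lemma~\ref{lem:1dim-sbnom} unnecessary. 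What the paper's route buys is a concrete, self-contained explanation of injectivity of the counit with explicit witnesses, and it isolates Lemma~\ref{lem:1dim-sbnom} as a standalone fact about one-dimensional $\sb$-sets. Both are sound; yours is arguably the cleaner derivation.
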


Before we prove this proposition, we need the following technical property of $\leq 1$-dimensional $\sb$-sets.

\begin{lemma}\label{lem:1dim-sbnom}
Let $Y$ be a nominal $\sb$-set.
If an element $y \in Y$ is supported by a singleton set (or even the empty set), then
\[ \{ m y \mid m \in \sb \} = \{ g y \mid g \in \perm \} \,. \]
\end{lemma}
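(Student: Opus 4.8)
The plan is to prove the two inclusions separately. The inclusion $\{gy \mid g \in \perm\} \subseteq \{my \mid m \in \sb\}$ is immediate, since $\perm$ is a submonoid of $\sb$ and thus every permutation is in particular a substitution.

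For the reverse inclusion $\{my \mid m \in \sb\} \subseteq \{gy \mid g \in \perm\}$, the key observation is that when $\{a\}$ supports $y$, the value $m \cdot y$ depends only on the image $m(a)$. First I would fix an arbitrary $m \in \sb$ and put $b = m(a)$. It then suffices to produce a permutation $g \in \perm$ with $g|_{\{a\}} = m|_{\{a\}}$, for then the defining property of the support gives $gy = my$ directly. I would distinguish two cases: if $b = a$, take $g = \id$; if $b \neq a$, take the transposition $g = \swap{a}{b}$. In both cases $g(a) = b = m(a)$, so $g$ and $m$ agree on $\{a\}$, whence $gy = my$ as required. The case where $\emptyset$ supports $y$ is subsumed, since any singleton is then also a support; alternatively one notes directly that $m|_\emptyset = m'|_\emptyset$ holds vacuously, so $my = y$ for all $m$ and both sets collapse to $\{y\}$.

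There is no serious obstacle here; the only subtlety is the small case distinction needed to realize the prescribed image $m(a)$ by an actual permutation, which is always possible because a single transposition already suffices. The argument crucially relies on $\{a\}$ being an $\sb$-support, so that its defining property quantifies over all substitutions --- this is exactly what lets us deduce $my = gy$ from agreement on the single atom $a$.
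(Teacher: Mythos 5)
Your proof is correct and follows the same route as the paper's: fix $m$, set $b = m(a)$, use the swap $\swap{a}{b}$ to find a permutation agreeing with $m$ on the $\sb$-support $\{a\}$, and conclude $my = gy$; the reverse inclusion is trivial. The only cosmetic difference is your case distinction $b = a$ versus $b \neq a$, which the paper avoids since $\swap{a}{b}$ is simply the identity when $a = b$.
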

\begin{proof}
Let $y \in Y$ be supported by $\{ a \}$ and let $m \in \sb$.
Now consider $b = m(a)$ and the bijection $g = \swap{a}{b}$.
Now $m|_{\{a\}} = g|_{\{a\}}$, meaning that $m y = g y$.
So the set $\{ m y \mid m \in \sb \}$ is contained in $\{ g y \mid g \in \perm \}$.
The inclusion the other way is trivial, which means $\{ m y \mid m \in \sb \} = \{ g y \mid g \in \perm \}$.
\end{proof}

\begin{proof}[Proof of Proposition~\ref{lem:1dim-iso}]
It is easy to see that $\eta \colon x \mapsto [\id, x]$ is injective.
Now to see that $\eta$ is surjective, let $[m, x] \in UF(X)$ and consider a support $\{ a \}$ of $x$ (this is a singleton or empty since $\dim(X) \leq 1$).
Let $b = m(a)$ and consider the swap $g = \swap{a}{b}$.
Now $[m, x] = [m g^{-1}, g x]$ and note that $\{ b \}$ supports $g x$ and $m g^{-1}|_{\{b\}} = \id|_{\{b\}}$.
We continue with $[m g^{-1}, g x] = [\id, g x]$, which concludes that $g x$ is the preimage of $[m, x]$.
Hence $\eta$ is an isomorphism.

To see that $\epsilon \colon [m, y] \mapsto m y$ is surjective, just consider $m = \id$.
To see that $\epsilon$ is injective, let $[m, y], [m', y'] \in FU(Y)$ be two elements such that $m y = m' y'$.
Then by using Lemma~\ref{lem:1dim-sbnom} we find $g, g' \in \perm$ such that $g y = m y = m' y' = g' y'$.
This means that $y$ and $y'$ are in the same orbit (of $U(Y)$) and have the same dimension.
Case 1: $\supp(y) = \supp(y') = \emptyset$, then $[m, y] = [\id, y] = [\id, y'] = [m', y']$.
Case 2: $\supp(y) = \{ a \}$ and $\supp(y') = \{ b \}$, then $\supp(g y) = \{ g(a) \}$ (Lemma~\ref{lem:transfer-support}).
In particular we now now that $m$ and $g$ map $a$ to $c = g(a)$, likewise $m'$ and $g'$ map $b$ to $c$.
Now $[m, y] = [m, g^{-1} g' y'] = [m g^{-1} g', y'] = [m', y']$, where we used $m g^{-1} g (b) = c = m'(b)$ in the last step.
This means that $\epsilon$ is injective and hence an isomorphism.
\end{proof}

By Proposition~\ref{lem:1dim-iso}, we may consider the set $\atoms$ as both $\sb$-set and $\perm$-set (abusing notation).
And we get an isomorphism $F(\atoms) \cong \atoms$ of nominal $\sb$-sets.
To appreciate the above results, we give a concrete characterisation of one-dimensional nominal sets:

\begin{lemma}\label{lm:char-dim-one}
    Let $X$ be a nominal $M$-set, for $M \in \{\sb,\perm\}$.
    Then $\dim(X) \leq 1$ iff there exist (discrete) sets $Y$ and $I$ such that $X \cong Y + \coprod_{I} \atoms$.
\end{lemma}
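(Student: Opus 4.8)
The plan is to prove the two implications separately, with the right-to-left direction being essentially immediate and the left-to-right direction carrying all the content.

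For the easy direction, suppose $X \cong Y + \coprod_I \atoms$ with $Y$ and $I$ discrete. Every element arising from $Y$ is a fixed point of the action, hence supported by $\emptyset$, while every element arising from a copy of $\atoms$ is an atom, supported by a singleton. Since support in a coproduct is computed inside the summand an element lives in, each $x$ has $|\supp(x)| \le 1$, so $\dim(X) \le 1$.

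For the converse, assume $\dim(X) \le 1$, so $\supp(x)$ is either empty or a singleton for every $x$. First I would split off the zero-dimensional part. Let $Y = \{x \in X \mid \supp(x) = \emptyset\}$; for such $x$ and any $m \in M$ we have $m|_\emptyset = 1|_\emptyset$ vacuously, hence $m x = x$. Thus $Y$ is a sub-$M$-set carrying the trivial action, i.e.\ a discrete set. The remaining elements form the set $X_1 = X \setminus Y$ of those with $|\supp(x)| = 1$. I would partition $X_1$ into $\perm$-orbits; because $\perm$ is a group this is a genuine equivalence relation, and by Lemma~\ref{lem:1dim-sbnom} each such orbit of a singleton-supported element is already closed under the full $\sb$-action, hence is a sub-$M$-set. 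Let $I$ index these orbits.

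For a fixed orbit, I would define $\phi(z)$ to be the unique atom in $\supp(z)$ and claim $\phi$ is an isomorphism of $M$-sets onto $\atoms$. Fixing a representative $x$ with $\supp(x) = \{a\}$, every element is $g x$ with $\phi(g x) = g(a)$ (using $\supp(g x) = g \cdot \supp(x)$ from Lemma~\ref{lem:transfer-support}); injectivity then follows because $g(a) = g'(a)$ means $g|_{\{a\}} = g'|_{\{a\}}$, whence $g x = g' x$ since $\{a\}$ supports $x$, and surjectivity is clear by applying suitable swaps. For equivariance, $\perm$-equivariance is immediate from $\phi(g z) = g \cdot \phi(z)$; for a general $m \in \sb$, the computation in the proof of Lemma~\ref{lem:1dim-sbnom} gives $m z = \swap{c}{m(c)} z$ when $\supp(z) = \{c\}$, so $\supp(m z) = \{m(c)\}$ and $\phi(m z) = m(c) = m \cdot \phi(z)$. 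Assembling the pieces, $X$ is the disjoint union of $Y$ with these orbits, which is exactly the coproduct of $M$-sets, yielding $X \cong Y + \coprod_I \atoms$. The one genuine obstacle is the case $M = \sb$: one must reconcile the group-orbit decomposition of $X_1$ with the requirement that the summands be sub-$\sb$-sets and that $\phi$ respect non-injective substitutions, and both of these are supplied precisely by Lemma~\ref{lem:1dim-sbnom}.
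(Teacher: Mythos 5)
Your proof is correct. The paper states Lemma~\ref{lm:char-dim-one} without proof (none appears in the appendix either), so there is no official argument to compare against; yours fills the gap in the natural way and uses exactly the right ingredients from the paper: the splitting into the empty-support part $Y$ and the singleton-support part, Lemma~\ref{lem:transfer-support} for the orbit/support bookkeeping, and Lemma~\ref{lem:1dim-sbnom} both to make each $\perm$-orbit closed under all of $\sb$ and to verify that $\phi$ commutes with non-injective substitutions, which are the only genuinely delicate points in the $M=\sb$ case.
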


In particular, the one-dimensional objects include the alphabets used for \emph{data words}, consisting of a product $S \times \atoms$ of a discrete set $S$ of action labels and the set of atoms.
These alphabets are very common in the study of register automata (see, e.g., \cite{IsbernerHS14}).

By the above and Theorem~\ref{thm:monoidal},
$F$ maps separated powers of $\atoms$ to powers,
and the set of separated words over $\atoms$ to the $\sb$-set of words over $\atoms$. 
\begin{corollary}
    \label{prop:An-iso}
    We have $F(\atoms^{(n)}) \cong \atoms^n$ and $F(\atoms^{(*)}) \cong \atoms^{*}$.
\end{corollary}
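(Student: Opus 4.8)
The plan is to derive both isomorphisms directly from the monoidal structure of $F$ established in Theorem~\ref{thm:monoidal}, together with the one-dimensional computation $F(\atoms) \cong \atoms$ obtained from Proposition~\ref{lem:1dim-iso}. No new construction is needed; this is a corollary assembled from results already in hand.

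For the first statement I would argue by induction on $n$. In the base case $n = 0$ we have $\atoms^{(0)} = 1$ by definition, and since $F$ is strong monoidal it preserves the unit, so $F(\atoms^{(0)}) = F(1) \cong 1 = \atoms^0$. For the inductive step, unfold the definition $\atoms^{(n+1)} = \atoms^{(n)} \sepprod \atoms$ and apply the monoidal isomorphism $p$ of Theorem~\ref{thm:monoidal}, which gives $F(\atoms^{(n+1)}) \cong F(\atoms^{(n)}) \times F(\atoms)$. By the induction hypothesis $F(\atoms^{(n)}) \cong \atoms^n$, and since $\dim(\atoms) \leq 1$, Proposition~\ref{lem:1dim-iso} yields $F(\atoms) \cong \atoms$. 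Composing these isomorphisms produces $F(\atoms^{(n+1)}) \cong \atoms^n \times \atoms = \atoms^{n+1}$, as required.

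The second statement is then immediate from Corollary~\ref{cor:sep-words} instantiated at $X = \atoms$: it gives $F(\atoms^{(*)}) \cong (F\atoms)^*$, and substituting $F(\atoms) \cong \atoms$ yields $F(\atoms^{(*)}) \cong \atoms^*$. Alternatively, one can bypass that corollary and reason directly: $\atoms^{(*)} = \coprod_n \atoms^{(n)}$, and since $F$ is a left adjoint (Theorem~\ref{thm:adjunction}) it preserves this coproduct, so $F(\atoms^{(*)}) \cong \coprod_n F(\atoms^{(n)}) \cong \coprod_n \atoms^n = \atoms^*$ by the first part.

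The argument is essentially bookkeeping, with every ingredient already proven. The only point demanding a little care is the chaining of isomorphisms in the inductive step: one must invoke functoriality of $\times$ on isomorphisms to compose the monoidal isomorphism of Theorem~\ref{thm:monoidal} with the componentwise isomorphisms $F(\atoms^{(n)}) \cong \atoms^n$ and $F(\atoms) \cong \atoms$, and observe that the base case rests on $F$ preserving the unit rather than on a separate computation. I do not anticipate a genuine obstacle, since the corollary is a direct structural consequence of the preceding results.
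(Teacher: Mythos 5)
Your proof is correct and takes essentially the same route as the paper: the paper obtains the corollary directly from the strong monoidality of $F$ (Theorem~\ref{thm:monoidal}), the isomorphism $F(\atoms) \cong \atoms$ following Proposition~\ref{lem:1dim-iso}, and Corollary~\ref{cor:sep-words} for the words case. You simply make explicit the routine induction on $n$ (and the coproduct-preservation argument) that the paper leaves implicit.
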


\subsection{On exponents}

We have described how $F$ and $U$ interact with (separated) products.
In this section, we establish a relationship between the magic wand ($\wandto$) and the exponent of nominal $\sb$-sets ($\sbto$).

\begin{definition}
Let $X \in \permnom$ and $Y \in \sbnom$.
We define a $\perm$-equivariant map $\phi$ as follows:
\[ \phi \colon (X \wandto U(Y)) \to U(F(X) \sbto Y) \]
is defined by using the composition
\[ F(X \wandto U(Y)) \times F(X) \xrightarrow{p^{-1}} F((X \wandto U(Y)) \sepprod X) \xrightarrow{F(\ev)} FU(Y) \xrightarrow{\epsilon} Y, \]
where $p^{-1}$ is from Theorem~\ref{thm:monoidal} and $\ev$ is the evaluation map of the exponent $\wandto$.
By Currying and the adjunction, we obtain $\phi$:

\begin{prooftree}
\AxiomC{$F(X \wandto U(Y)) \times F(X) \to Y$}
\RightLabel{\, by Currying}
\UnaryInfC{$F(X \wandto U(Y)) \to (F(X) \sbto Y)$}
\RightLabel{\, by Theorem~\ref{thm:adjunction}}
\UnaryInfC{$\phi \colon (X \wandto U(Y)) \to U(F(X) \sbto Y)$}
\end{prooftree}
\end{definition}

With this map we can prove a generalisation of Theorem~\ref{thm:adjunction}.
In particular, the following theorem generalises the one-to-one correspondence between maps $X \to U(Y)$ and maps $F(X) \to Y$.
First, it shows that this correspondence is $\perm$-equivariant.
Second, it extends the correspondence to all finitely supported maps and not just the equivariant ones.

\begin{theorem}\label{thm:exponent-separated}
    The sets $X \wandto U(Y)$ and $U(F(X) \sbto Y)$ are naturally isomorphic via $\phi$ as nominal $\perm$-sets.
\end{theorem}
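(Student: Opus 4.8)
The plan is to avoid manipulating the explicit descriptions of $\wandto$ and $\sbto$, and instead prove that $\phi$ is an isomorphism abstractly, by exhibiting a chain of natural bijections on hom-sets and invoking the Yoneda lemma. Concretely, for an arbitrary object $Z \in \permnom$ I would construct a bijection $\Hom_{\permnom}(Z, X \wandto U(Y)) \cong \Hom_{\permnom}(Z, U(F(X) \sbto Y))$ that is natural in $Z$, and then check that this bijection is precisely postcomposition with $\phi$. Since the Yoneda embedding is fully faithful, a natural isomorphism between the representable functors $\Hom_{\permnom}(-, X \wandto U(Y))$ and $\Hom_{\permnom}(-, U(F(X) \sbto Y))$ forces the representing map $\phi$ to be an isomorphism in $\permnom$.

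The chain of natural bijections I would assemble is
\begin{align*}
\Hom_{\permnom}(Z, X \wandto U(Y))
&\cong \Hom_{\permnom}(Z \sepprod X, U(Y)) \\
&\cong \Hom_{\sbnom}(F(Z \sepprod X), Y) \\
&\cong \Hom_{\sbnom}(F(Z) \times F(X), Y) \\
&\cong \Hom_{\sbnom}(F(Z), F(X) \sbto Y) \\
&\cong \Hom_{\permnom}(Z, U(F(X) \sbto Y)).
\end{align*}
The first step uses that $X \wandto -$ is right adjoint to $- \sepprod X$; the second and last use the adjunction $F \dashv U$ of Theorem~\ref{thm:adjunction}; the third uses the monoidal isomorphism $p \colon F(Z \sepprod X) \xrightarrow{\cong} F(Z) \times F(X)$ of Theorem~\ref{thm:monoidal}; and the fourth uses that $\sbnom$ is monoidal closed, so $F(X) \sbto -$ is right adjoint to $- \times F(X)$. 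Each of these is a natural bijection in $Z$, so their composite is as well.

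It then remains to identify this composite with postcomposition by $\phi$. By Yoneda it suffices to trace the image of $\id$ under the chain evaluated at $Z = X \wandto U(Y)$. Transposing $\id_{X \wandto U(Y)}$ across the wand adjunction yields the evaluation map $\ev \colon (X \wandto U(Y)) \sepprod X \to U(Y)$; transposing further across $F \dashv U$ yields $\epsilon \circ F(\ev)$, using the standard counit formula $f^\sharp = \epsilon \circ F(f)$; precomposing with $p^{-1}$ from Theorem~\ref{thm:monoidal} produces exactly the composite $\epsilon \circ F(\ev) \circ p^{-1} \colon F(X \wandto U(Y)) \times F(X) \to Y$ appearing in the definition of $\phi$. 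The final two steps of the chain are precisely the currying and the transpose across $F \dashv U$ used to obtain $\phi$, so the Yoneda-induced map is $\phi$ on the nose.

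I expect the main obstacle to be the bookkeeping in this last paragraph: confirming that the abstractly-defined composite of five adjunction isomorphisms, applied to the identity, coincides with the concretely-defined $\phi$, which hinges on correctly invoking the counit–transpose identity and the fact that transposing $\id$ across the wand adjunction produces $\ev$. Once $\phi$ is identified as the Yoneda image of a natural isomorphism, being an isomorphism is automatic. Finally, naturality of the resulting isomorphism in $X$ and $Y$ follows because every building block of the chain (the wand adjunction, $F \dashv U$, the monoidal iso $p$, and the exponent adjunction in $\sbnom$) is natural in all of its arguments, so the composite is natural in $X$ and $Y$ as well; I would spell this out by the standard argument that a composite of natural transformations is natural.
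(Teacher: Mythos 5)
Your proof is correct, and it takes a genuinely different route from the paper's. The paper establishes Theorem~\ref{thm:exponent-separated} by constructing an explicit inverse to $\phi$: it defines maps $\alpha \colon U(F(X) \sbto Y) \to (UF(X) \permto U(Y))$ and $\beta \colon (UF(X) \permto U(Y)) \to (X \permto U(Y))$ by currying, plus a restriction map $r \colon (X \permto U(Y)) \twoheadrightarrow (X \wandto U(Y))$, and then verifies by two long element-level computations on representatives $[m,x]$ (relegated to the appendix) that $r \circ \beta \circ \alpha$ is a two-sided inverse of $\phi$; those computations rely on concrete features of the construction, such as the two-variable description of the exponent in $\sbnom$, the formula for currying there, and a freshness argument needed to evaluate $p^{-1}$ on representatives. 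You instead argue representably: your five-step chain of hom-set bijections uses exactly the ingredients the paper has already established --- the wand adjunction $- \sepprod X \dashv (X \wandto -)$, the adjunction $F \dashv U$ of Theorem~\ref{thm:adjunction} (twice), the monoidal isomorphism $p$ of Theorem~\ref{thm:monoidal}, and Cartesian closedness of $\sbnom$ --- each step is natural in $Z$, and the Yoneda identification of the composite with postcomposition by $\phi$ is essentially definitional, since $\phi$ was \emph{defined} by transposing $\ev$ along these very adjunctions ($\ev \mapsto \epsilon \circ F(\ev) \mapsto \epsilon \circ F(\ev) \circ p^{-1}$, then currying and transposing across $F \dashv U$). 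What the paper's computation buys is an explicit formula for $\phi^{-1}$, which could be convenient when one actually needs to evaluate $\lambda^{-1}$ in Theorem~\ref{thm:adjunction-lift}; what your argument buys is economy and robustness: no reasoning about the quotient defining $F(X)$ at all, and naturality of the isomorphism in $X$ and $Y$ --- which the paper's proof leaves implicit --- falls out of the construction for free.
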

\begin{proof}
We define some additional maps in order to construct the inverse of $\phi$.
First, from Theorem~\ref{thm:adjunction} we get the following isomorphism:
\[ q \colon U(X \times Y) \xrightarrow{=} U(X) \times U(Y) \]
Second, with this map and Currying, we obtain the following two natural maps:

\begin{prooftree}
\AxiomC{$U(F(X) \sbto Y) \times UF(X) \xrightarrow{q^{-1}} U((F(X) \sbto Y) \times F(X)) \xrightarrow{U(\ev)} U(Y)$}
\RightLabel{\, by Currying}
\UnaryInfC{$\alpha \colon U(F(X) \sbto Y) \to (UF(X) \permto U(Y))$}
\end{prooftree}

\begin{prooftree}
\AxiomC{$(UF(X) \permto U(Y)) \times X \xrightarrow{\id {\times} \eta} (UF(X) \permto U(Y)) \times UF(X) \xrightarrow{\ev} U(Y)$}
\RightLabel{\, by Currying}
\UnaryInfC{$\beta \colon (UF(X) \permto U(Y)) \to (X \permto U(Y))$}
\end{prooftree}

Last, we note that the inclusion $A \sepprod B \subseteq A \times B$ induces a \emph{restriction} map $r \colon (B \permto C) \twoheadrightarrow (B \wandto C)$ (again by Currying).
A calculation shows that $r \circ \beta \circ \alpha$ is the inverse of $\phi$ (see Appendix).
\end{proof}

Note that this theorem gives an alternative characterisation of the magic wand in terms of the exponent in $\sbnom$, if the codomain is $U(Y)$.
Moreover, for a $1$-dimensional object $X$ in $\sbnom$, we obtain the following special case of the theorem (using the co-unit isomorphism from Proposition~\ref{lem:1dim-iso}):

\begin{corollary}\label{cor:exponent-iso}
    Let $X, Y$ be nominal $\sb$-sets.
    For $1$-dimensional $X$, the nominal $\perm$-set $U(X) \wandto U(Y)$ is naturally isomorphic to $U(X \sbto Y)$.
\end{corollary}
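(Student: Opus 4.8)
The plan is to obtain the isomorphism by instantiating Theorem~\ref{thm:exponent-separated} and then collapsing the functor $F$ on the $1$-dimensional object $X$ using the co-unit isomorphism of Proposition~\ref{lem:1dim-iso}.

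First, I would apply Theorem~\ref{thm:exponent-separated} with its $\permnom$-argument taken to be $U(X)$; note that $X$ here is a nominal $\sb$-set, so $U(X)$ is a legitimate object of $\permnom$. This immediately yields a natural isomorphism of $\perm$-nominal sets
\[ U(X) \wandto U(Y) \;\cong\; U\big(FU(X) \sbto Y\big). \]

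Second, I would invoke the hypothesis $\dim(X) \le 1$. By Proposition~\ref{lem:1dim-iso} the co-unit $\epsilon \colon FU(X) \to X$ is an isomorphism in $\sbnom$. Since the exponent $(\blarg) \sbto Y$ is the internal hom of the symmetric monoidal closed category $\sbnom$, it is (contravariantly) functorial in its first argument, so the isomorphism $\epsilon$ induces an isomorphism
\[ FU(X) \sbto Y \;\cong\; X \sbto Y \]
in $\sbnom$, given concretely by precomposition with $\epsilon^{-1}$. Applying the functor $U$, which preserves isomorphisms, then gives $U(FU(X) \sbto Y) \cong U(X \sbto Y)$. Composing the two isomorphisms yields $U(X) \wandto U(Y) \cong U(X \sbto Y)$, as claimed. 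For naturality, the first isomorphism is natural in both arguments by Theorem~\ref{thm:exponent-separated}, and the second is assembled from the natural transformation $\epsilon$ and the functor $U$; since precomposing a natural isomorphism with the functor $U(\blarg)$ in the first variable preserves naturality, the composite is natural in $X$ and $Y$.

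I expect the only point requiring genuine care to be the middle step: verifying that the co-unit isomorphism $\epsilon$ really does lift, via contravariant functoriality of the internal hom, to an isomorphism of the exponent objects in $\sbnom$, and that this lift is natural, so that it can be composed with the natural isomorphism from Theorem~\ref{thm:exponent-separated}. This is standard closed-category bookkeeping---any isomorphism in a closed category induces an isomorphism of exponents---but it is the step where one must be explicit about the direction of variance and about naturality in $X$ (as opposed to $Y$, which is inherited directly from the theorem).
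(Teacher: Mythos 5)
Your proposal is correct and follows exactly the paper's route: the paper likewise obtains this corollary by instantiating Theorem~\ref{thm:exponent-separated} at $U(X)$ and then transporting along the co-unit isomorphism $\epsilon \colon FU(X) \to X$ from Proposition~\ref{lem:1dim-iso}, which applies since $\dim(X) \leq 1$. The middle step you flag (lifting $\epsilon$ through the contravariant internal hom and checking naturality) is indeed the only bookkeeping the paper leaves implicit, and your treatment of it is sound.
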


\begin{remark}
The set $\atoms \wandto U(X)$ coincides with the atom abstraction $[\atoms] UX$ (Remark~\ref{rem:atom-abstr}).
Hence, as a special case of Corollary~\ref{cor:exponent-iso}, we recover~\cite[Theorem 34]{GabbayH08},
which states a bijective correspondence between $[\atoms] UX$ and $U(\atoms \sbto X)$.
\end{remark}

\section{Nominal and separated automata}\label{sec:automata}

In this section, we study nominal (Moore) automata, which recognise languages over infinite alphabets.
After recalling the basic definitions,
we introduce a new variant of automata based on the separating product, 
which we call \emph{separated nominal automata}. 
These automata represent nominal languages which are 
$\sb$-equivariant, essentially meaning they are closed under substitution. 
Our main result is that, if a `classical' nominal automaton (over $\perm$) 
represents a language $L$ which is $\sb$-equivariant, then $L$
can also be represented by a separated nominal automaton. The latter
can be exponentially smaller (in number of orbits) than the original automaton,
as we show in a concrete example.

\begin{remark}
We will work with a general output set $O$ instead of just acceptance.
The reason for this is that $\sb$-equivariant functions $L \colon \atoms \to 2$ are not very interesting: they are defined purely by the length of the input.
By using more general output $O$, we may still capture interesting behaviour, e.g., the language in Example~\ref{ex:fifo}.
\end{remark}

\begin{definition}\label{def:nominal-aut}
Let $\Sigma, O$ be $\perm$-sets, called input/output alphabet respectively. 
\begin{itemize}
\item
A \emph{($\perm$)-nominal language} is an equivariant map of the form $L \colon \Sigma^* \to O$.
\item 
A \emph{nominal (Moore) automaton} $\mathcal{A} = (Q,\delta,o,q_0)$ consists of a nominal set of states $Q$,
an equivariant transition function $\delta \colon Q \times \Sigma \to Q$,
an equivariant output function $o \colon Q \rightarrow O$, and
an initial state $q_0 \in Q$ with an empty support. 

\item 
The \emph{language semantics} is the map $l \colon Q \times \Sigma^* \to O$,
defined inductively by
    $$
    l(x, \varepsilon) = o(x)\,, \qquad l(x, aw) = l(\delta(x,a),w)
    $$
for all $x \in Q$, $a \in \Sigma$ and $w \in \Sigma^*$.
\item For $l^\flat \colon Q \to (\Sigma^* \permto O)$ the transpose of $l$,
we have that $l^\flat(q_0) \colon \Sigma^* \to O$ is equivariant;
this is called the \emph{language accepted by $\mathcal{A}$}.
\end{itemize}
\end{definition}

Note that the language accepted by an automaton can equivalently be characterised by considering paths through the automaton from the initial state.

If the state space $Q$ and the alphabets $\Sigma, O$ are orbit finite, this 
allows us to run algorithms (reachability, minimization, etc.) on such automata \cite{BojanczykKL14}, 
but there is no need to assume this for now. 
For an automaton $\mathcal{A} = (Q,\delta,o,q_0)$, we define the set of 
\emph{reachable states} as the least set $R(\mathcal{A}) \subseteq Q$ such that 
$q_0 \in R(\mathcal{A})$ and for all $x \in R(\mathcal{A})$ and $a \in \Sigma$, $\delta(x,a) \in R(\mathcal{A})$. 

\begin{example}\label{ex:fifo}
    We model a bounded FIFO queue of size $n$ as a nominal Moore automaton,
    explicitly handling the data in the automaton structure.%
    \footnote{We use a reactive version of the queue data structure which is slightly different from the versions in \cite{MSSKS17,IsbernerHS14}.}
    The input alphabet $\Sigma$ and output alphabet $O$ are as follows:
    \[ \Sigma = \{ \Put(a) \mid a \in \atoms \} \cup \{ \Pop \},
    \qquad O = \atoms \cup \{ \perp \}. \]
    
    The input alphabet encodes two actions:
    putting a new value on the queue and popping a value.
    The output is either a value (the front of the queue) or $\perp$ if the queue is empty.
    A queue of size $n$ is modelled by the automaton $(Q, \delta, o, q_0)$ defined as follows.
    \[ Q = \atoms^{\leq n} \cup \{ \perp \}
    \qquad q_0 = \varepsilon
    \qquad o(a_1 \ldots a_k) =
            \begin{cases}
                a_1   &\text{if } k \geq 1 \\
                \perp &\text{otherwise} 
            \end{cases} \]
    \[ \begin{array}{rcl}
        \delta(a_1 \ldots a_k, \Put(b)) &=& 
            \begin{cases}
                a_1 \ldots a_k b & \text{if } k < n \\
                \perp            &\text{otherwise}
            \end{cases}
            \qquad 
            \delta(\bot, x) = \perp \\[10pt]
    \delta(a_1 \ldots a_k, \Pop) &=&
            \begin{cases}
                a_2 \ldots a_k &\text{if } k > 0 \\
                \perp          &\text{otherwise}
            \end{cases} 
        \end{array} \]
    The automaton is depicted in Figure~\ref{fig:fifo} for the case $n = 3$.
    The language accepted by this automaton assigns to a word $w$ the first element
    of the queue after executing the instructions in $w$ from left to right, and
    $\perp$ if the input is ill-behaved, i.e., $\Pop$ is applied to an empty queue or $\Put(a)$ to a full queue.

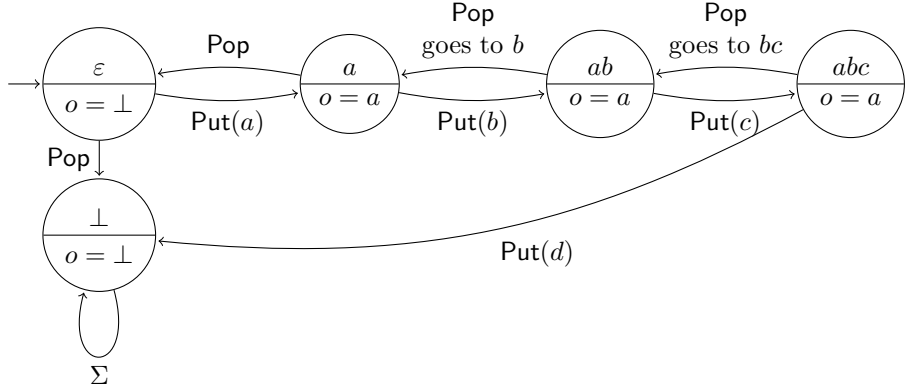
\begin{figure}
\centering
\begin{tikzpicture}[node distance=3.33cm, shorten >=1pt, bend angle=10, initial text=]
  \node[state with output, initial] (0) {$\varepsilon$ \nodepart{lower} $o = {\perp}$};
  \node[state with output, right of=0] (1) {$a$     \nodepart{lower} $o = a$};
  \node[state with output, right of=1] (2) {$a b$   \nodepart{lower} $o = a$};
  \node[state with output, right of=2] (3) {$a b c$ \nodepart{lower} $o = a$};
  \node[state with output, below=.5cm of 0] (s) {$\perp$ \nodepart{lower} $o = {\perp}$};
  
  \path[->]
  (0) edge [bend right] node [below] {$\Put(a)$} (1)
  (0) edge              node [left]  {$\Pop$} (s)
  (1) edge [bend right] node [above] {$\Pop$} (0)
  (1) edge [bend right] node [below] {$\Put(b)$} (2)
  (2) edge [bend right] node [above, align=center] {$\Pop$ \\ goes to $b$} (1)
  (2) edge [bend right] node [below] {$\Put(c)$} (3)
  (3) edge [bend right] node [above, align=center] {$\Pop$ \\ goes to $b c$} (2)
  (3) edge [bend left=17] node [below right] {$\Put(d)$} (s)
  (s) edge [loop below] node {$\Sigma$} (s);
\end{tikzpicture}
\caption{The FIFO automaton from Example~\ref{ex:fifo} with $n = 3$.
The right-most state consists of \emph{five} orbits as we can take $a, b, c$ distinct, all the same, or two of them equal in three different ways.
Consequently, the complete state space has ten orbits.
The output of each state is denoted in the lower part.}
\label{fig:fifo}
\end{figure}
\end{example}

\begin{definition}\label{def:sep-aut}
Let $\Sigma, O$ be $\perm$-sets.
A \emph{separated language} is an equivariant map of the form $\Sigma^{(*)} \to O$.
A \emph{separated automaton} $\mathcal{A} = (Q,\delta,o,q_0)$ 
consists of $Q$, $o$ and $q_0$ defined as in a nominal automaton,
and an equivariant transition function $\delta \colon Q \sepprod \Sigma \rightarrow Q$. 

The \emph{separated language semantics} of such an automaton is given by the map $s \colon Q \sepprod \Sigma^{(*)} \to O$,
defined by
\[ s(x, \varepsilon) = o(x)\,, \qquad s(x, aw) = s(\delta(x,a),w) \]
for all $x \in Q$, $a \in \Sigma$ and $w \in \Sigma^{(*)}$ such that $x \separated aw$ and $a \separated w$.

Let $s^\flat \colon Q \to (\Sigma^{(*)} \wandto O)$ be the transpose of $s$.
Then $s^\flat(q_0) \colon \Sigma^{(*)} \to O$ corresponds to a separated language,
this is called the \emph{separated language accepted by $\mathcal{A}$}.
\end{definition}

By definition of the separated product, the transition function is only defined
on a state $x$ and letter $a \in \Sigma$ if $x \separated a$. 
In Example~\ref{ex:sep-aut-fifo} below, we describe the bounded FIFO 
as a separated automaton, and describe its accepted language.

First, we show how the language semantics of separated nominal automata
extends to a language over \emph{all} words,
provided that both the input alphabet $\Sigma$ and the output alphabet $O$ are $\sb$-sets.

\begin{definition}
    Let $\Sigma$ and $O$ be nominal $\sb$-sets.
    An $\sb$-equivariant function $L \colon \Sigma^* \to O$ is called an \emph{$\sb$-language}.
\end{definition}

Notice the difference between an $\sb$-language $L \colon \Sigma^* \to O$ 
and a $\perm$-language $L' \colon (U \Sigma)^* \to U(O)$. They are both functions from
$\Sigma^{*}$ to $O$, but the latter is only $\perm$-equivariant,
while the former satisfies the stronger property of $\sb$-equivariance. 
Languages over separated words, and $\sb$-languages, are connected as follows.

\begin{proposition}\label{thm:extension}
    Suppose $\Sigma, O$ are both nominal $\sb$-sets, and suppose $\dim(\Sigma) \leq 1$. 
    There is a one-to-one correspondence 
    $$
    \frac{S \colon (U\Sigma)^{(*)} \rightarrow UO \quad\text{ $\perm$-equivariant}}
    {\overline{S} \colon \Sigma^* \rightarrow O \quad \text{ $\sb$-equivariant}}
    $$
    between separated languages and $\sb$-nominal languages. 
    From  $\overline{S}$ to $S$, this is given by application of the forgetful functor and restricting to the subset of separated words.
    
    For the converse direction, given $w = a_1 \ldots a_n \in \Sigma^{*}$, let
    $b_1, \ldots, b_n \in \Sigma$ such that
        $w \separated b_i$ for all $i$, and
    $b_i \separated b_j$ for all $i,j$ with $i \neq j$.
    Define $m \in \sb$ by 
    $$m(a) = \begin{cases} a_i & \text{ if } a = b_i \text{ for some }i \\ a & \text{ otherwise} \end{cases}$$
    Then 
    $\overline{S}(a_1a_2a_3\cdots a_n) = m \cdot S(b_1 b_2 b_3 \cdots b_n)$.
\end{proposition}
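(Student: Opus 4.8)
The plan is to realise the correspondence as an instance of the adjunction $F \dashv U$ of Theorem~\ref{thm:adjunction}, after rewriting the domain of an $\sb$-language. The key is the chain of isomorphisms of nominal $\sb$-sets
\[ F\bigl((U\Sigma)^{(*)}\bigr) \;\cong\; (FU\Sigma)^{*} \;\cong\; \Sigma^{*} . \]
The first isomorphism is Corollary~\ref{cor:sep-words} at $X = U\Sigma$. The second applies $(-)^{*}$ to the co-unit $\epsilon \colon FU\Sigma \to \Sigma$, which is an isomorphism exactly because $\dim(\Sigma) \le 1$ (Proposition~\ref{lem:1dim-iso}). This is the only use of the dimension hypothesis, and it is indispensable: for higher-dimensional $\Sigma$ one has $FU\Sigma \not\cong \Sigma$, as already $F(\atoms \times \atoms) \cong \atoms^{2} + \atoms$ shows. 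Composing with the hom-set bijection of the adjunction then gives
\[ \permnom\bigl((U\Sigma)^{(*)}, UO\bigr) \;\cong\; \sbnom\bigl(F((U\Sigma)^{(*)}), O\bigr) \;\cong\; \sbnom\bigl(\Sigma^{*}, O\bigr), \]
which is precisely the claimed one-to-one correspondence between separated languages $S$ and $\sb$-languages $\overline S$. Mutual inverseness needs no extra computation, being inherited from the adjunction and the isomorphisms.

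Next I would check that the forward map of this bijection is the stated one, namely $\overline S \mapsto (U\overline S)$ restricted to separated words. Writing $T \colon F((U\Sigma)^{(*)}) \to O$ for the map obtained from $\overline S$ along the isomorphism above, the induced $S$ is the transpose $T^{\flat}$, which by definition satisfies $T^{\flat}(w) = T([\id,w])$. It therefore suffices to observe that, for a separated word $w \in (U\Sigma)^{(*)}$, the class $[\id, w]$ corresponds under the two concrete isomorphisms to $w$ itself viewed in $\Sigma^{*}$: the monoidal map $p$ (Theorem~\ref{thm:monoidal}) sends $[\id, (w_1,\dots,w_n)]$ to $([\id,w_1],\dots,[\id,w_n])$, and $\epsilon$ sends each $[\id, w_i]$ to $w_i$. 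Hence $S(w) = \overline S(w)$ for every separated $w$, i.e.\ $S$ is exactly the restriction of $\overline S$.

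Finally, the explicit backward formula follows cleanly from the two facts just established together with $\sb$-equivariance of $\overline S$. Given $w = a_1\cdots a_n$, choose $b_1,\dots,b_n$ and $m$ as in the statement; the separation conditions make $b_1\cdots b_n$ a separated word, so $S(b_1\cdots b_n) = \overline S(b_1\cdots b_n)$ by the forward identification, and the definition of $m$ gives $m \cdot b_i = a_i$, hence $m \cdot (b_1\cdots b_n) = a_1\cdots a_n$ in $\Sigma^{*}$. Using $\sb$-equivariance of $\overline S$ we then obtain
\[ m \cdot S(b_1\cdots b_n) = m \cdot \overline S(b_1\cdots b_n) = \overline S\bigl(m \cdot (b_1\cdots b_n)\bigr) = \overline S(a_1\cdots a_n), \]
which is the desired formula.

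The main obstacle is the first paragraph: assembling the isomorphism $F((U\Sigma)^{(*)}) \cong \Sigma^{*}$ and, in particular, verifying that the forward transpose coincides with honest restriction. Once this is in place the explicit formula is forced by equivariance, and its independence of the auxiliary choices of $b_i$ and $m$ is automatic, precisely because the correspondence itself is a well-defined bijection — this is the advantage of the abstract route over checking the formula by hand. A secondary point worth addressing is that suitable fresh $b_i$ (in the orbit of the respective $a_i$) and a substitution $m \in \sb$ with $m \cdot b_i = a_i$ indeed exist, where pairwise separation guarantees the $b_i$ are distinct so that $m$ is well-defined, and $\dim(\Sigma) \le 1$ guarantees that $m$ acting on the singleton supports realises $m \cdot b_i = a_i$.
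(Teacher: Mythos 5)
Your proposal follows essentially the same route as the paper: the paper's proof is precisely the chain of correspondences $(U\Sigma)^{(*)} \to UO \;\leftrightarrow\; F((U\Sigma)^{(*)}) \to O \;\leftrightarrow\; (FU\Sigma)^* \to O \;\leftrightarrow\; \Sigma^* \to O$, obtained from Theorem~\ref{thm:adjunction}, Corollary~\ref{cor:sep-words}, and Proposition~\ref{lem:1dim-iso}, exactly as in your first paragraph. Your additional verifications --- that the forward direction is honest restriction (via the concrete action of $p$ and $\epsilon$ on $[\id,w]$), and that the explicit backward formula with the fresh $b_i$ in the orbits of the $a_i$ follows from $\sb$-equivariance --- are details the paper leaves implicit, and they are correct.
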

\begin{proof}
There is the following chain of one-to-one correspondences, from the results of the previous section:
\begin{prooftree}
\AxiomC{$(U\Sigma)^{(*)} \rightarrow UO$}
\RightLabel{\, by Theorem~\ref{thm:adjunction}}
\UnaryInfC{$F(U\Sigma)^{(*)} \rightarrow O$}
\RightLabel{\, by Corollary~\ref{cor:sep-words}}
\UnaryInfC{$(FU\Sigma)^* \rightarrow O$}
\RightLabel{\, by Proposition~\ref{lem:1dim-iso}}
\UnaryInfC{$\Sigma^* \rightarrow O$}
\end{prooftree}
\end{proof}

Thus, every separated automaton over $U(\Sigma), U(O)$ gives rise to an $\sb$-language $\overline{S}$,
corresponding to the language $S$ accepted by the automaton.

Any nominal automaton $\mathcal{A}$ restricts to a separated automaton, formally described in Definition~\ref{def:restr-aut}.
It turns out that if the $(\perm)$-language accepted by $\mathcal{A}$ is actually an $\sb$-language, 
then the restricted automaton already represents this language, as the extension $\overline{S}$
of the associated separated language $S$ (Proposition~\ref{thm:separated-sb-lang}). Hence, in such a case,
the restricted separated automaton suffices to describe the language of $\mathcal{A}$.

\begin{definition}\label{def:restr-aut}
Let $i \colon Q \sepprod U(\Sigma) \hookrightarrow Q \times U(\Sigma)$ be the natural inclusion map.
A nominal automaton $\mathcal{A} = (Q, \delta, o, q_0)$ 
induces a separated automaton $\mathcal{A}_*$, by setting $\mathcal{A}_* = (Q, \delta \circ i, o, q_0)$.
\end{definition}

\begin{proposition}\label{thm:separated-sb-lang}
    Suppose $\Sigma, O$ are both $\sb$-sets, and suppose $\dim(\Sigma) \leq 1$.
    Let $L \colon (U\Sigma)^* \to UO$ be the $\perm$-nominal language accepted by a nominal automaton $\mathcal{A}$,
    and suppose $L$ is $\sb$-equivariant.
    Let $S$ be the separated language accepted by $\mathcal{A}_*$.
    Then $L = U(\overline{S})$.
\end{proposition}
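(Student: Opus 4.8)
The plan is to reduce the equality $L = U(\overline{S})$ to two observations: first, that the separated language $S$ accepted by $\mathcal{A}_*$ is exactly the restriction of $L$ to separated words; and second, that the reconstruction formula for $\overline{S}$ from Proposition~\ref{thm:extension}, together with the $\sb$-equivariance of $L$, forces $\overline{S}$ and $L$ to agree on all words.

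First I would show that for every state $x$ and every separated word $w \in (U\Sigma)^{(*)}$ with $x \separated w$, the two semantics agree: $s(x,w) = l(x,w)$. This goes by induction on $w$. The base case $w = \varepsilon$ is immediate, since both sides equal $o(x)$. For $w = aw'$, separatedness of $w$ gives $a \separated w'$, and the key point is that $\delta(x,a) \separated w'$, so that the induction hypothesis applies to $s(\delta(x,a),w') = l(\delta(x,a),w')$. This holds because $\delta$ is equivariant, whence $\supp(\delta(x,a)) \subseteq \supp(x) \cup \supp(a)$, and both $\supp(x)$ and $\supp(a)$ are disjoint from $\supp(w')$ by the hypotheses $x \separated aw'$ and $a \separated w'$. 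Since $\mathcal{A}_*$ uses the restriction of the same $\delta$, the recursion never leaves the separated domain. Instantiating at $x = q_0$, which has empty support and is therefore separated from every word, yields $S(w) = s(q_0,w) = l(q_0,w) = L(w)$ for all separated $w$; that is, $S = L|_{(U\Sigma)^{(*)}}$.

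It then remains to compute $U(\overline{S})$ on an arbitrary word $w = a_1 \cdots a_n \in \Sigma^*$ and compare it to $L(w)$. By Proposition~\ref{thm:extension}, choosing $b_1, \ldots, b_n$ pairwise separated and separated from $w$, and $m \in \sb$ with $m(b_i) = a_i$, we have $\overline{S}(a_1 \cdots a_n) = m \cdot S(b_1 \cdots b_n)$. Since $b_1 \cdots b_n$ is a separated word, the first step gives $S(b_1 \cdots b_n) = L(b_1 \cdots b_n)$, and since $m \cdot (b_1 \cdots b_n) = a_1 \cdots a_n$, the $\sb$-equivariance of $L$ yields $m \cdot L(b_1 \cdots b_n) = L(m \cdot (b_1 \cdots b_n)) = L(a_1 \cdots a_n)$. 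Chaining these equalities gives $U(\overline{S})(w) = \overline{S}(w) = L(w)$, as required.

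I expect the main obstacle to be the inductive argument of the first step, specifically verifying that processing a separated word keeps the current state separated from the remaining suffix; this is what guarantees that the restricted automaton $\mathcal{A}_*$ produces the same outputs as $\mathcal{A}$ along separated inputs, and it is exactly where support preservation under the equivariant map $\delta$ is used. Once that is in place, the remainder is a direct application of Proposition~\ref{thm:extension} and the assumed $\sb$-equivariance of $L$.
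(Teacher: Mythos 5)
Your proof is correct, and it rests on the same pillar as the paper's own proof: the one-to-one correspondence of Proposition~\ref{thm:extension}. The difference lies in how that correspondence is used and in what you make explicit. The paper's argument is an abstract injectivity argument: $L$ and $U(\overline{S})$ are both $\sb$-equivariant languages, both restrict to the same separated language under the correspondence, hence they are equal. In doing so, the paper silently assumes exactly your first step --- that the separated language $S$ accepted by $\mathcal{A}_*$ is the restriction of $L$ to separated words. Your induction, with the key observation that $\supp(\delta(x,a)) \subseteq \supp(x) \cup \supp(a)$ keeps the current state separated from the remaining suffix (so the restricted transition function $\delta \circ i$ never falls outside its domain), is a genuine gap-filler; it is also precisely what shows that the separated semantics of Definition~\ref{def:sep-aut} is well defined along separated inputs. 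Your second step then replaces the paper's appeal to injectivity by a pointwise computation: you apply the explicit reconstruction formula of Proposition~\ref{thm:extension} and use $\sb$-equivariance of $L$ to collapse $m \cdot S(b_1 \cdots b_n) = m \cdot L(b_1 \cdots b_n) = L(a_1 \cdots a_n)$. This amounts to re-deriving, in this special case, the round-trip identity underlying the bijection rather than quoting its injectivity. What the paper's route buys is brevity and the conceptual point that nothing needs to be computed once the correspondence is established; what your route buys is a self-contained argument that documents the two facts the paper leaves implicit, at the cost of partially re-proving the correspondence.
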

\begin{proof}
It follows from the one-to-one correspondence in Proposition~\ref{thm:extension}:
on the bottom there are two languages ($L$ and $U(\overline{S})$), while there is only the restriction of $L$ on the top.
We conclude that $L = U(\overline{S})$.
\end{proof}

As we will see in Example~\ref{ex:sep-aut-fifo}, separated automata allow us to represent $\sb$-languages 
in a much smaller way than nominal automata. Given a nominal automaton $\mathcal{A}$, a smaller
separated automaton can be obtained by computing the reachable part of the restriction $\mathcal{A}_*$.
The reachable part is defined similarly (but only where $\delta$ is defined) and denoted by $R(\mathcal{A}_*)$ as well.

\begin{lemma}
For any nominal automaton $\mathcal{A}$, we have
$R(\mathcal{A}_*) \subseteq R(\mathcal{A})$.
\end{lemma}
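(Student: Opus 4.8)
The plan is to show that $R(\mathcal{A}_*)$ is contained in $R(\mathcal{A})$ by a straightforward induction on the inductive definition of the reachable set $R(\mathcal{A}_*)$. Recall that $R(\mathcal{A})$ is defined as the least subset of $Q$ containing $q_0$ and closed under the transition function $\delta$, i.e., closed under $x \mapsto \delta(x,a)$ for all $a \in \Sigma$. Similarly, $R(\mathcal{A}_*)$ is the least subset of $Q$ containing $q_0$ and closed under the \emph{restricted} transition function $\delta \circ i$, which is only defined on separated pairs $(x,a)$ with $x \separated a$. The essential point is that $R(\mathcal{A})$ is closed under \emph{more} operations than $R(\mathcal{A}_*)$, so it is a larger (or equal) set.

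First I would verify the base case: $q_0 \in R(\mathcal{A})$ holds by definition of $R(\mathcal{A})$, so the initial state of the separated reachable set is already reachable in $\mathcal{A}$. For the inductive step, suppose $x \in R(\mathcal{A}_*)$ and, by the induction hypothesis, $x \in R(\mathcal{A})$. The only way a new element enters $R(\mathcal{A}_*)$ is via the separated transition: for some $a \in \Sigma$ with $x \separated a$, the element $(\delta \circ i)(x,a) = \delta(x,a)$ is added. But $\delta(x,a) \in R(\mathcal{A})$ by closure of $R(\mathcal{A})$ under $\delta$, which holds for \emph{all} $a \in \Sigma$ and in particular for the separated ones. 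Hence every element added to $R(\mathcal{A}_*)$ is already in $R(\mathcal{A})$.

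Since $R(\mathcal{A})$ satisfies both defining closure conditions for $R(\mathcal{A}_*)$ (it contains $q_0$ and is closed under the restricted transitions, as these are a subset of all transitions), and since $R(\mathcal{A}_*)$ is defined as the \emph{least} such set, the minimality of $R(\mathcal{A}_*)$ immediately gives $R(\mathcal{A}_*) \subseteq R(\mathcal{A})$. This is in fact the cleanest phrasing: rather than an explicit induction, one observes that $R(\mathcal{A})$ is a witness for the closure property defining $R(\mathcal{A}_*)$, so the least such set is contained in it.

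I do not anticipate any genuine obstacle here, as the statement is essentially a formal consequence of the two reachable sets being least fixed points, with the separated transition relation being a restriction of the full one. The only subtlety worth stating explicitly is that the restricted transition $\delta \circ i$ agrees with $\delta$ wherever it is defined (namely on separated pairs), so no new reachable states can arise in $\mathcal{A}_*$ that were not already reachable in $\mathcal{A}$; the containment can be strict precisely because $\mathcal{A}_*$ forbids non-separated transitions, which is exactly the source of the state-space savings exploited in Example~\ref{ex:sep-aut-fifo}.
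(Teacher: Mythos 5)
Your proof is correct, and it is exactly the argument the paper intends: the lemma is stated without proof there, being an immediate consequence of the definitions, and your observation that $R(\mathcal{A})$ contains $q_0$ and is closed under the restricted transitions $\delta \circ i$ (a subset of all transitions), so that minimality of $R(\mathcal{A}_*)$ gives the inclusion, is the canonical way to fill it in. The minimality phrasing in your third paragraph is the cleanest version; the explicit induction in your second paragraph is just an unfolding of the same idea and could be omitted.
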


The converse inclusion of the above proposition does certainly not hold, as shown by the following example.

\begin{example}\label{ex:sep-aut-fifo}
    Let $\mathcal{A}$ be the automaton modelling a bounded FIFO queue (for some $n$), from Example~\ref{ex:fifo}.
    The $\perm$-nominal language $L$ accepted by $\mathcal{A}$ is $\sb$-equivariant:
    it is closed under application of arbitrary substitutions.
    
    The separated automaton $\mathcal{A}_*$
    is given simply by restricting the transition function
    to $Q \sepprod \Sigma$, i.e., a $\Put(a)$-transition from a state $w \in Q$
    exists only if $a$ does not occur in $w$.
    The separated language $S$ accepted by this new automaton is the restriction
    of the nominal language of $\mathcal{A}$ to separated words.
    By Proposition~\ref{thm:separated-sb-lang}, we have $L = U(\overline{S})$.
    Hence, the separated automaton $\mathcal{A}_*$ represents $L$,
    essentially by closing the associated separated language $S$ under all substitutions.
    
    The \emph{reachable} part of $\mathcal{A}_*$ is given by
    \[ R_{\mathcal{A}_*} =  \atoms^{(\leq n)} \cup \{ \perp \} \,. \]
    Clearly, restricting $\mathcal{A}_*$ to the reachable part does not affect the accepted language.
    However, while the orginal state space $Q$ has exponentially many orbits in $n$,
    $R_{\mathcal{A}_*}$ has only $n+1$ orbits!
    Thus, taking the reachable part of $R_{\mathcal{A}_*}$ yields a separated automaton which
    represents the FIFO language $L$ in a much smaller way than the original automaton.
\end{example}

\subsection{Separated automata: coalgebraic perspective}
\label{sec:sep-aut}

Nominal automata and separated automata can be presented as \emph{coalgebras}
on the category of $\perm$-nominal sets. In this section we revisit
the above results from this perspective, and generalise from 
(equivariant) languages to finitely supported languages. In particular,
we retrieve the extension from separated languages to $\sb$-languages,
by establishing $\sb$-languages as a final separated automaton. The latter
result follows by instantiating a well-known technique for lifting adjunctions
to categories of coalgebras, using the results of Section~\ref{sec:adjunction}.
In the remainder of this section we assume familiarity with the theory
of coalgebras, see, e.g.,~\cite{jacobs-coalg,Rutten00}.

\begin{definition}\label{def:nominal-aut-coalg}
Let $M$ be a submonoid of $\sb$, and let $\Sigma$, $O$
be nominal $M$-sets, referred to as the input and output alphabet respectively. We define the functor
$B_M \colon \MNom \to \MNom$ by $B_M(X) = O \times (\Sigma \mto X)$.
An \emph{($M$)-nominal (Moore) automaton} is a $B_M$-coalgebra.
\end{definition}

A $B_M$-coalgebra can be presented as a nominal set $Q$ together with the pairing
\[ \langle o, \delta^\flat \rangle \colon Q \to O \times (\Sigma \mto Q) \]
of an equivariant \emph{output} function $o \colon Q \to O$,
and (the transpose of) an equivariant \emph{transition} function $\delta \colon Q \times \Sigma \to Q$.
In case $M = \perm$, this coincides with the
automata of Definition~\ref{def:nominal-aut}, omitting initial states. 
The language semantics is generalised accordingly, as follows.
Given such a $B_M$-coalgebra $(Q, \langle o, \delta^\flat \rangle)$, 
the \emph{language semantics} $l \colon Q \times \Sigma^* \to O$
is given by
\begin{equation}\label{eq:lang-sem-coalg}
l(x, \varepsilon) = o(x)\,, \qquad l(x, aw) = l(\delta(x,a),w)
\end{equation}
for all $x \in S$, $a \in \Sigma$ and $w \in \Sigma^*$. 

\begin{proposition}\label{thm:final-nom-aut}
    Let $M$ be a submonoid of $\sb$, let $\Sigma$, $O$
be nominal $M$-sets. 
    The nominal $M$-set $\Sigma^* \mto O$ 
    extends to a final $B_M$-coalgebra $(\Sigma^* \mto O, \zeta)$,
    such that the unique homomorphism from a given $B_M$-coalgebra
    is the transpose $l^\flat$ of the language semantics~\eqref{eq:lang-sem-coalg}.
\end{proposition}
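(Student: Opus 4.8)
The plan is to give the final coalgebra explicitly through the ``output and derivatives'' description familiar from the theory of Moore automata, and then to identify the unique homomorphism with $l^\flat$ by a direct induction. Throughout I treat an element of the exponent $\Sigma^* \mto O$ as a finitely supported function $L \colon \Sigma^* \to O$, with $L$ and its transpose related pointwise; that this carrier is a nominal $M$-set is exactly the symmetric monoidal closed structure on $\MNom$ recalled earlier.

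First I would define the coalgebra structure $\zeta \colon (\Sigma^* \mto O) \to O \times (\Sigma \mto (\Sigma^* \mto O))$ as a pairing $\zeta = \langle o_\zeta, \delta_\zeta^\flat \rangle$, mirroring the presentation of a $B_M$-coalgebra as an output map and a (transposed) transition map. Concretely, the output is $o_\zeta(L) = L(\varepsilon)$ and the transition is the $a$-derivative $\delta_\zeta(L, a) = L_a$, where $L_a(w) = L(aw)$. The routine but necessary well-definedness obligations are that each $L_a$ is again finitely supported (any support of $L$ together with $\supp(a)$ supports $L_a$), and that $o_\zeta$ and $\delta_\zeta$ are $M$-equivariant with $\delta_\zeta$ having a finitely supported transpose, so that $\zeta$ is a genuine morphism in $\MNom$. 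I would discharge these using the closed structure (currying and uncurrying) rather than unfolding the monoid action on the exponent by hand, since in the monoid case that action is the less transparent $(m \cdot f)(n,x) = f(mn,x)$ rather than the simple group formula.

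For finality, fix a $B_M$-coalgebra $(Q, \langle o, \delta^\flat\rangle)$ and take $h = l^\flat$, the transpose of the language semantics~\eqref{eq:lang-sem-coalg}. To see that $h$ is a coalgebra homomorphism I would check $\zeta \circ h = B_M(h) \circ \langle o, \delta^\flat\rangle$ componentwise: the output component unfolds to $l^\flat(x)(\varepsilon) = o(x)$, which is precisely the base clause of $l$, and the derivative component unfolds to $l^\flat(x)(aw) = l^\flat(\delta(x,a))(w)$, which is precisely the inductive clause $l(x,aw) = l(\delta(x,a),w)$. Before this, one must confirm that $l^\flat(x)$ is indeed finitely supported for every $x$, so that $h$ really lands in $\Sigma^* \mto O$; this follows because $l$ is equivariant, equivariant maps preserve finite support, and transposing preserves finite support.

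Finally, for uniqueness: if $h' \colon Q \to (\Sigma^* \mto O)$ is any homomorphism, commutativity of the same square forces $h'(x)(\varepsilon) = o(x)$ and $h'(x)(aw) = h'(\delta(x,a))(w)$ for all $x$, $a$, $w$. A straightforward induction on the length of $w$ then shows $h'(x)(w) = l(x,w) = h(x)(w)$, hence $h' = h$. I expect the only real friction to be the bookkeeping: verifying that every map in sight stays inside $\MNom$, namely finite support of the derivatives and of $l^\flat(x)$, and equivariance of $\zeta$, in the monoid setting where the exponent does not admit the convenient group-theoretic description. The combinatorial content of finality itself is just the elementary two-line induction above.
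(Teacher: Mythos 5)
First, for calibration: the paper never actually proves Proposition~\ref{thm:final-nom-aut} --- it is stated as a known fact, and no argument appears in the main text or the appendix --- so your attempt can only be compared with the standard argument the paper implicitly invokes. Your outline (output-and-derivative structure on $\Sigma^* \mto O$, check that $l^\flat$ is a homomorphism, uniqueness by induction on word length) is exactly that standard argument, and it is correct as written when $M$ is a group, in particular for $M=\perm$. The genuine gap is your opening move: for a proper monoid $M$ --- and the case the paper needs later, in Corollary~\ref{cor:final-coalgs}, is precisely $M=\sb$ --- an element of $\Sigma^* \mto O$ is \emph{not} a finitely supported function $L\colon \Sigma^*\to O$. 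As the paper recalls in Section~2 and uses explicitly in the appendix proof of Theorem~\ref{thm:exponent-separated} (``the exponent in $\sbnom$ consists of two-variable functions''), the exponent $\Sigma^* \mto O$ consists of finitely supported \emph{equivariant} maps $\psi\colon M\times\Sigma^*\to O$. For non-invertible $M$ the set of plain functions $\Sigma^*\to O$ carries no natural $M$-action at all (the conjugation formula requires inverses), so ``finitely supported $L\colon\Sigma^*\to O$'' is not even well defined, and your appeal to ``the symmetric monoidal closed structure recalled earlier'' is circular: that structure is exactly what produces the two-variable description you are setting aside. Nor is this confined to the bookkeeping you defer: the definition of $\zeta$ via $L(\varepsilon)$ and $L_a$, the homomorphism equations, and the uniqueness induction $h'(x)(w)=l(x,w)$ are all written in the one-variable language, so none of them is meaningful for $M=\sb$ as stated.

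The repair, and what it costs: either (i) define $\zeta$ abstractly as the composite isomorphism $\Sigma^*\mto O \cong (1+\Sigma\times\Sigma^*)\mto O\cong O\times\bigl(\Sigma\mto(\Sigma^*\mto O)\bigr)$ (the internal hom turns the coproduct into a product, plus currying), and run both verifications on \emph{transposes}: a morphism $h'\colon Q\to(\Sigma^*\mto O)$ corresponds bijectively to an equivariant function $Q\times\Sigma^*\to O$, and on these honest two-argument functions your two-line induction on $|w|$ goes through verbatim and proves the statement for every submonoid $M\subseteq\sb$; or (ii) keep the concrete style but with two-variable functions, where the $a$-derivative must be $\psi_a(n,w)=\psi\bigl(n,(n\cdot a)w\bigr)$, and where your uniqueness step needs the additional fact that $\psi$ is determined by its trace $\psi(\id,-)$ --- true for $M=\sb$, but only via a freshness argument using Lemma~\ref{lem:GM-support} (replace $m$ by some $m'$ agreeing with it on a support of $\psi$ and whose action hits the given argument), and not obvious for an arbitrary submonoid. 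A minor further point: equivariance of the inductively defined $l$ itself, which you need before transposing, is also an induction that your write-up asserts rather than performs, though it is easy.
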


A \emph{separated automaton} (Definition~\ref{def:sep-aut}, without initial states)
corresponds to a coalgebra for the functor $\sa \colon \permnom \to \permnom$ 
given by $\sa(X) = O \times (\Sigma \wandto X)$. 
The separated language semantics arises by finality.

\begin{proposition}\label{thm:final-sep}
    The set $\Sigma^{(\ast)} \wandto O$ is the carrier of a final $\sa$-coalgebra,
    such that the unique coalgebra homomorphism from a given
    $\sa$-coalgebra $(Q,\langle o, \delta \rangle)$
    is the transpose $s^\flat\!$
    of the separated language semantics $s \colon Q \sepprod \Sigma^{(\ast)} \to O$
    (Definition~\ref{def:sep-aut}). 
\end{proposition}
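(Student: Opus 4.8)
The plan is to establish finality of $(\Sigma^{(\ast)} \wandto O, \zeta)$ for some suitable coalgebra structure $\zeta$, by showing that for every $\sa$-coalgebra $(Q, \langle o, \delta \rangle)$ the transpose $s^\flat \colon Q \to (\Sigma^{(\ast)} \wandto O)$ of the separated language semantics is the unique $\sa$-coalgebra homomorphism. First I would pin down the coalgebra structure $\zeta \colon (\Sigma^{(\ast)} \wandto O) \to O \times (\Sigma \wandto (\Sigma^{(\ast)} \wandto O))$. The output component should read off the value of a separating partial function on the empty word $\varepsilon$, i.e. $L \mapsto L(\varepsilon)$, which is well-defined since $\varepsilon$ has empty support and is thus separated from everything. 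The transition component should send a separating function $L \colon \Sigma^{(\ast)} \rightharpoonup O$ and a letter $a$ separated from $L$ to the ``left derivative'' $a^{-1}L$, the separating function mapping a word $w$ (with $w \separated a$ and $w \separated L$ appropriately) to $L(aw)$. The main care here is verifying these are genuine separating partial functions and that $\zeta$ is $\perm$-equivariant; this is where I expect the bookkeeping of supports to be most delicate.

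Next I would verify that $s^\flat$ is a coalgebra homomorphism, i.e. that $\sa(s^\flat) \circ \langle o, \delta \rangle = \zeta \circ s^\flat$. This amounts to two equations matching the two components. The output equation is $o(x) = s^\flat(x)(\varepsilon)$, which is immediate from $s(x,\varepsilon) = o(x)$ in Definition~\ref{def:sep-aut}. The transition equation unfolds to the defining recursion $s(x, aw) = s(\delta(x,a), w)$, i.e. that the $a$-derivative of $s^\flat(x)$ equals $s^\flat(\delta(x,a))$; this is precisely the inductive clause of the separated language semantics, so it should follow by construction once the derivative operation is correctly identified with the transition component of $\zeta$.

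For uniqueness, suppose $h \colon Q \to (\Sigma^{(\ast)} \wandto O)$ is any $\sa$-coalgebra homomorphism. I would show by induction on the length of a separated word $w$ that $h(x)(w) = s^\flat(x)(w)$ whenever $w \separated x$. The base case $w = \varepsilon$ follows from the output component of the homomorphism equation, and the inductive step $w = aw'$ follows from the transition component, reducing $h(x)(aw')$ to $h(\delta(x,a))(w')$ and invoking the induction hypothesis. The subtlety is that at each step one must check the relevant separation conditions ($x \separated aw'$, $a \separated w'$, $\delta(x,a) \separated w'$) hold so that both sides are defined; since $\delta$ is only defined on $Q \sepprod \Sigma$ and the magic wand encodes exactly the separation requirement, these conditions are built into the domains and should propagate correctly. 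The main obstacle, as with Proposition~\ref{thm:final-nom-aut}, is not the inductive argument itself but carefully transposing between the separated map $s \colon Q \sepprod \Sigma^{(\ast)} \to O$ and its curried form $s^\flat \colon Q \to (\Sigma^{(\ast)} \wandto O)$, and ensuring every partial function produced genuinely lies in the magic-wand exponent rather than merely being finitely supported. Once the structure $\zeta$ is fixed and the derivative operation is shown to land in $\Sigma^{(\ast)} \wandto O$, finality follows by the standard pattern.
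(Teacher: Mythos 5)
Your overall plan --- equip $\Sigma^{(\ast)} \wandto O$ with an ``output at $\varepsilon$'' plus ``derivative'' structure $\zeta$ and show that $s^\flat$ is the unique homomorphism by induction on the length of separated words --- is the natural direct argument; note that the paper states this proposition without a written proof, and its only proved finality result for $\sa$ (Theorem~\ref{thm:adjunction-lift} and Corollary~\ref{cor:final-coalgs}) goes through the lifted adjunction and only covers output sets of the form $U(O)$ for an $\sb$-set $O$, so a direct argument like yours is indeed what is required here. The genuine gap is in the step you expect to be mere support bookkeeping: the transition part of $\zeta$, as you define it, does \emph{not} land in $\Sigma^{(\ast)} \wandto O$. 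Take $\Sigma = \atoms$, $O = \{0,1\}$ discrete, and let $L$ be the constant function $1$: a separating function with $\supp(L) = \emptyset$ and full domain. Your derivative $a^{-1}L$ is then the partial function constantly $1$ on $\{w \mid a \notin \supp(w)\}$. Its least support as a partial function is $\{a\}$, because the $\perm$-action moves its domain, while
\[
\bigcup_{w \in \dom(a^{-1}L)} \supp\bigl((a^{-1}L)(w)\bigr) \setminus \supp(w) \;=\; \emptyset ,
\]
so the third condition in the paper's description of separating functions fails: $a^{-1}L$ is finitely supported and satisfies the ``defined iff separated'' condition, but it is not separating, hence not an element of the wand. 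The correct derivative in this example is the constant function with \emph{full} domain; in general one must pass from your naive restriction to its canonical separating extension, and producing that extension is exactly the missing idea.

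The clean repair keeps your strategy but defines $\zeta$ without explicit partial functions: the output is $\ev \circ \langle \id, \varepsilon \rangle$ (unproblematic, as you note, since $\varepsilon$ has empty support), and the transition is the transpose, under the adjunction $(- \sepprod \Sigma) \dashv (\Sigma \wandto -)$, of the equivariant map
\[
((\Sigma^{(\ast)} \wandto O) \sepprod \Sigma) \sepprod \Sigma^{(\ast)}
\cong (\Sigma^{(\ast)} \wandto O) \sepprod (\Sigma \sepprod \Sigma^{(\ast)})
\xrightarrow{\;\id \sepprod \mathrm{cons}\;} (\Sigma^{(\ast)} \wandto O) \sepprod \Sigma^{(\ast)}
\xrightarrow{\;\ev\;} O ,
\]
where $\mathrm{cons}(a,w) = aw$, using that $\sepprod$ is the monoidal product. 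With this definition the equations you need hold by construction on separated tuples, and both the homomorphism check and the uniqueness argument should be carried out after transposing along $(- \sepprod \Sigma^{(\ast)}) \dashv (\Sigma^{(\ast)} \wandto -)$. This also legitimises your uniqueness step: your induction establishes $h(x)(w) = s^\flat(x)(w)$ only for $w \separated x$, whereas the domain of $h(x)$ may be strictly larger than $\{w \mid w \separated x\}$ (as the example above shows, $\supp(h(x))$ can be smaller than $\supp(x)$); the conclusion $h = s^\flat$ follows not pointwise but because transposition is a bijection, so agreement of the transposes $Q \sepprod \Sigma^{(\ast)} \to O$ suffices. With these two adjustments your proof goes through.
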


Next, we provide an alternative description of the
final $\sa$-coalgebra which assigns $\sb$-nominal languages
to states of separated nominal automata. 
The essence is to obtain
a final $\sa$-coalgebra from the final $B_{\sb}$-coalgebra. 
In order to prove this, we use a technique to lift adjunctions to categories of coalgebras.
This technique occurs regularly in the coalgebraic study
of automata~\cite{JSS14,KlinR16,KerstanKW14}.

\begin{theorem}\label{thm:adjunction-lift}
    Let $\Sigma$ be a $\perm$-set, and $O$ an $\sb$-set. 
    Define $\sa$ and $B_\sb$ accordingly, as $\sa(X) = UO \times (\Sigma \wandto X)$
    and $B_\sb(X) = O \times (F\Sigma \sbto X)$. 
    There is an adjunction $\overline{F} \dashv \overline{U}$ in:
    $$
    \xymatrix@R=1.5cm{
        \CoAlg(\sa) \ar@/^10pt/[rr]^-{\overline F} 
        & 
        \bot
        &
        \CoAlg(B_\sb) \ar@/^10pt/[ll]^-{\overline U} 
    }
    $$
    where $\overline{F}$ and $\overline{U}$ coincide with $F$ and $U$ respectively on carriers.
\end{theorem}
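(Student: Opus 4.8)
The plan is to invoke the standard technique for lifting an adjunction to categories of coalgebras~\cite{JSS14,KlinR16,KerstanKW14}, using the base adjunction $F \dashv U$ from Theorem~\ref{thm:adjunction}. Recall the shape of that technique: to lift $F \dashv U$ to an adjunction $\overline{F} \dashv \overline{U}$ between $\CoAlg(\sa)$ and $\CoAlg(B_\sb)$ acting as $F$ and $U$ on carriers, it suffices to produce a natural transformation $\lambda \colon F \sa \Rightarrow B_\sb F$ (which lifts $F$, sending a coalgebra $c \colon X \to \sa X$ to $\lambda_X \circ Fc \colon FX \to B_\sb FX$) together with its mate $\rho \colon \sa U \Rightarrow U B_\sb$ (which lifts $U$ in the dual way), the adjunction being guaranteed once $\lambda$ and $\rho$ are mutually inverse up to the mate correspondence — in practice, once one of them is a natural isomorphism. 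The abstract lemma then takes care of the coalgebraic bookkeeping, namely that the unit $\eta$ and counit $\epsilon$ of $F \dashv U$ lift to coalgebra homomorphisms. Hence the real content to supply is a single connecting natural isomorphism; everything else is formal.

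Concretely, I would construct the isomorphism $\sa U \cong U B_\sb$ directly. Since $U$ is a right adjoint (Theorem~\ref{thm:adjunction}) it preserves products, and the exponent correspondence of Theorem~\ref{thm:exponent-separated}, instantiated at $X := \Sigma$, gives a natural isomorphism $\phi \colon (\Sigma \wandto U Y) \xrightarrow{\cong} U(F\Sigma \sbto Y)$. Combining these yields the chain
\[ U B_\sb(Y) = U\bigl(O \times (F\Sigma \sbto Y)\bigr) \cong UO \times U(F\Sigma \sbto Y) \cong UO \times (\Sigma \wandto U Y) = \sa(UY), \]
natural in $Y$, where the middle step uses $\phi^{-1}$ and the first uses product preservation. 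This is exactly the data $\rho$ (in the orientation $\sa U \Rightarrow U B_\sb$, its inverse serving to lift $U$), and it is a natural isomorphism because $\phi$ is.

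With this isomorphism in hand I would define $\overline{U}$ on a $B_\sb$-coalgebra $(Y,d)$ by post-composing $U d \colon UY \to U B_\sb Y$ with the inverse $U B_\sb Y \xrightarrow{\cong} \sa UY$, and define $\overline{F}$ through the mate $\lambda \colon F\sa \Rightarrow B_\sb F$ of the same isomorphism; both functors agree with $F$ and $U$ on carriers by construction, and the cited lifting lemma delivers $\overline{F} \dashv \overline{U}$. The main obstacle is not mathematical depth but precision: one must get the variances and the mate correspondence exactly right so that the connecting isomorphism above matches the hypotheses of the lifting lemma in the correct orientation, and one must confirm naturality in $Y$ (which reduces to the already-established naturality of $\phi$ in Theorem~\ref{thm:exponent-separated} together with naturality of product preservation). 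Once the isomorphism $\sa U \cong U B_\sb$ is correctly oriented, the verification that $\eta$ and $\epsilon$ lift to homomorphisms — and hence that the two liftings form an adjunction — is the routine part handled by the general technique; all the substantive input sits in Theorem~\ref{thm:exponent-separated}.
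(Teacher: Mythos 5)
Your proposal is correct and takes essentially the same route as the paper: the paper also constructs the connecting natural isomorphism between $\sa U$ and $U B_\sb$ by combining the exponent isomorphism $\phi$ of Theorem~\ref{thm:exponent-separated} with product preservation by the right adjoint $U$, and then invokes the general adjunction-lifting theorem (Theorem~2.14 of~\cite{HermidaJ98}). In particular, your definition of $\overline{U}$ by post-composing $U(\gamma)$ with the inverse isomorphism matches the paper's $\overline{U}(X, \gamma) = (UX, \lambda^{-1} \circ U(\gamma))$ exactly.
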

\begin{proof}
There is a natural isomorphism
$ \lambda \colon \sa U \Rightarrow U B_\sb $
given by
\[ \lambda \colon UO \times (\Sigma \wandto UX) \xrightarrow{\id \times \phi} UO \times U(F \Sigma \sbto X) \xrightarrow{\cong} U(O \times (F \Sigma \sbto X)) \,, \]
where $\phi$ is the isomorphism from Theorem~\ref{thm:exponent-separated} and the isomorphism on the right comes from $U$ being a right adjoint.
The result now follows from Theorem~2.14 in~\cite{HermidaJ98}.
In particular, $\overline{U}(X, \gamma) = (UX, \lambda^{-1} \circ U(\gamma))$.
\end{proof}

Since right adjoints preserve limits, and final objects in particular, we
obtain the following, giving semantics of separated automata through finality.

\begin{corollary}\label{cor:final-coalgs}
    Let $((F \Sigma)^* \sbto O, \zeta)$  be the final $B_{\sb}$-coalgebra (Proposition~\ref{thm:final-nom-aut}). 
    Then the $\sa$-coalgebra $\overline{U}(\Sigma^* \sbto O, \zeta)$ is final and carried
    by the set $ (F \Sigma)^* \sbto O $ of $\sb$-nominal languages. 
\end{corollary}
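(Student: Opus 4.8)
The plan is to combine the adjunction lifted to coalgebras (Theorem~\ref{thm:adjunction-lift}) with the elementary fact that right adjoints preserve all limits, hence final objects in particular. First I would observe that $\overline{U} \colon \CoAlg(B_\sb) \to \CoAlg(\sa)$ is a right adjoint, by Theorem~\ref{thm:adjunction-lift}. Since every right adjoint preserves limits, and a final object is the limit of the empty diagram, $\overline{U}$ must send a final $B_\sb$-coalgebra to a final $\sa$-coalgebra. This is the one genuinely conceptual step, and it is short.

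Next I would identify the two ingredients concretely. By Proposition~\ref{thm:final-nom-aut} instantiated at $M = \sb$ and input alphabet $F\Sigma$, the nominal $\sb$-set $(F\Sigma)^* \sbto O$ carries a final $B_\sb$-coalgebra $\zeta$. Applying $\overline{U}$, whose action on carriers coincides with $U$, yields an $\sa$-coalgebra whose carrier is $U\bigl((F\Sigma)^* \sbto O\bigr)$, and by the preservation argument above this coalgebra is final. It remains only to match this carrier against the claimed set of $\sb$-nominal languages. Here I would note that, since $F$ maps separated products to products and coproducts to coproducts, we have $(F\Sigma)^* \cong F(\Sigma^{(*)})$ by Corollary~\ref{cor:sep-words}, so these $\sb$-languages are precisely the functions $\Sigma^* \to O$ of interest once $F\Sigma$ is taken as the $\sb$-alphabet; this is the sense in which $(F\Sigma)^* \sbto O$ is ``the set of $\sb$-nominal languages.''

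The main obstacle, such as it is, lies not in the finality argument but in the bookkeeping: verifying that the $\sa$-functor appearing in Corollary~\ref{cor:final-coalgs} is the same one for which Theorem~\ref{thm:adjunction-lift} established the lifted adjunction. In the theorem, $\sa$ is defined with $\sa(X) = UO \times (\Sigma \wandto X)$ where $\Sigma$ is a $\perm$-set and $O$ an $\sb$-set; one must check the indices and the alphabets line up with the automaton functor of Proposition~\ref{thm:final-sep}, in particular that the input alphabet of the separated automaton is $\Sigma$ (a $\perm$-set) while that of the $B_\sb$-coalgebra is $F\Sigma$. Once this matching of functors is confirmed, the proof is essentially immediate. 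I would therefore write the argument as: invoke Theorem~\ref{thm:adjunction-lift} to obtain $\overline{U}$ as a right adjoint; observe that right adjoints preserve final objects; apply $\overline{U}$ to the final $B_\sb$-coalgebra of Proposition~\ref{thm:final-nom-aut}; and read off the carrier. No delicate estimates or case analyses are needed, which is exactly the point of having set up the coalgebraic adjunction in advance.
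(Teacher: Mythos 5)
Your proposal matches the paper's own argument exactly: the paper derives this corollary by noting that $\overline{U}$ is a right adjoint (Theorem~\ref{thm:adjunction-lift}), that right adjoints preserve limits and hence final objects, and then applying $\overline{U}$ to the final $B_\sb$-coalgebra of Proposition~\ref{thm:final-nom-aut}. Your additional bookkeeping --- checking that the functors of Theorem~\ref{thm:adjunction-lift} and Proposition~\ref{thm:final-sep} line up, and identifying the carrier $U\bigl((F\Sigma)^* \sbto O\bigr)$ with the set of $\sb$-nominal languages via Corollary~\ref{cor:sep-words} --- is correct and only makes explicit what the paper leaves implicit.
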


\section{Related and future work}

Fiore and Turi described a similar adjunction between certain presheaf categories~\cite{FioreT01}.
However, Staton describes in his thesis that the usage of presheaves allows for many degenerate models and one should look at sheaves instead~\cite{Staton07}.
The category of sheaves is equivalent to the category of nominal sets.
Staton transfers the adjunction of Fiore and Turi to the sheaf categories.
We conjecture that the adjunction presented in this paper is equivalent, but defined in more elementary means.
The monoidal property of $F$, which is crucial for our application in automata, has not been discussed before.

An interesting line of research is the generalisation to other symmetries by Boja\'{n}czyk et al.~\cite{BojanczykKL14}.
In particular, the total order symmetry is relevant,
since it allows one to compare elements on their order, as often used in data words.
In this case the symmetries are given by the group of all monotone bijections.
Many results of nominal sets generalise to this symmetry.
For monotone substitutions, however, the situation seems more subtle.
For example, we note that a substitution which maps two values to the same value actually maps \emph{all} the values in between to that value.
Whether the adjunction from Theorem~\ref{thm:adjunction} generalises to other symmetries is left as future work.

This research was motivated by learning nominal automata.
If we know a nominal automaton recognises an $\sb$-language, then we are better off learning a separated automaton directly.
From the $\sb$-semantics of separated automata, it follows that we have a Myhill-Nerode theorem, which means that learning is feasible.
We expect that this can be useful, since we can achieve an exponential reduction this way.

Boja\'{n}czyk et al. prove that nominal automata are equivalent to register automata in terms of expressiveness \cite{BojanczykKL14}.
However, when translating from register automata with $n$ states to nominal automata, we may get exponentially many orbits.
This happens for instance in the FIFO automaton (Example~\ref{ex:fifo}).
We have shown that the exponential blow-up is avoidable by using separated automata, for this example and in general for $\sb$-equivariant languages.
Such languages come from register automata which manipulate data but where do control flow does not depend on comparisons.
This typically occurs in data structures.

An important open problem is whether the latter requirement can be relaxed, by adding separated transitions only locally in a nominal automaton.
A possible step in this direction is to consider the monad $T = UF$ on $\permnom$ and incorporate it in the automaton model.
We believe that this is the hypothesised ``substitution monad'' from~\cite{MSSKS17}.
The monad is monoidal (sending separated products to Cartesian products) and if $X$ is an orbit-finite nominal set, then so is $T(X)$.
This means that we can consider nominal $T$-automata and we can perhaps determinise them using coalgebraic methods~\cite{SilvaBBR13}.

\section*{Acknowledgements}

We would like to thank Gerco van Heerdt and Tom Hirschowitz for their useful comments.

\bibliographystyle{plain}
\bibliography{refs}

\clearpage\appendix
\section{Omitted Proofs}

\begin{lemma}(Lemma~\ref{lem:GM-support})
	Let $X$ be a nominal $\sb$-set and $x \in X$.
	Then a set $C$ supports $x$ w.r.t. the $\sb$-action if and only if $C$ supports $x$ w.r.t. the (induced) $\perm$-action.
\end{lemma}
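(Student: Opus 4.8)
The plan is to prove the two inclusions between $\sb$-supports and $\perm$-supports separately, with essentially all the work concentrated in the implication \emph{$\perm$-support $\Rightarrow$ $\sb$-support}. The reverse direction, \emph{$\sb$-support $\Rightarrow$ $\perm$-support}, is immediate: since $\perm \subseteq \sb$, the defining condition for an $\sb$-support (quantifying over all $m_1, m_2 \in \sb$ with $m_1|_C = m_2|_C$) specialises to the condition for a $\perm$-support (quantifying only over $g_1, g_2 \in \perm$).

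For the harder direction I would first reduce to a finite, canonical support. Since $X$ is $\sb$-nominal, $x$ has a finite $\sb$-support $D$, which by the easy direction is also a $\perm$-support; hence $x$ has a finite $\perm$-support, and therefore a least $\perm$-support $S := \supp(x)$ with $S \subseteq D$. Because any superset of an $\sb$-support is again an $\sb$-support, and because $S$ is contained in \emph{every} $\perm$-support $C$, it suffices to prove the single statement that $S$ is an $\sb$-support, i.e.\ that $m_1 x = m_2 x$ whenever $m_1|_S = m_2|_S$.

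The core of the argument is a one-atom lemma: if $m, m' \in \sb$ agree everywhere except possibly at a single atom $d \in D \setminus S$, then $m x = m' x$. To prove it I would pick a fresh atom $d'$ with $d' \notin D$ and $d' \neq d$; since $d, d' \notin S$ and $S$ is a $\perm$-support, the swap $g = \swap{d}{d'}$ fixes $x$, so $m x = (m g) x$ and $m' x = (m' g) x$. Composing with $g$ relocates the only possible point of disagreement from $d$ to $d'$, which lies outside $D$; hence $m g$ and $m' g$ agree on all of $D$, and since $D$ is an $\sb$-support we get $(m g) x = (m' g) x$, whence $m x = m' x$. Finally I would chain this lemma along the finite set $D \setminus S = \{d_1, \dots, d_k\}$: interpolate between $m_1$ and $m_2$ by changing the value at one $d_j$ at a time, obtaining $m_1 x = m_2 x$ because the two endpoints agree with $m_1$ respectively $m_2$ on all of $D$ (using $m_1|_S = m_2|_S$), and each single step preserves the action by the lemma.

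The main obstacle is precisely what makes the statement surprising: a substitution $m$ may be \emph{non-injective}, so when $m_1|_S$ is non-injective there is no permutation agreeing with $m_i$ on $S$, and one cannot mimic the group-theoretic argument that reduces the $\sb$-action to the $\perm$-action. The fresh-atom trick circumvents this by never trying to match an entire substitution with a permutation; it uses permutations only to push a \emph{single} disagreement into the region where $x$ is insensitive, namely outside the finite $\sb$-support $D$, so that $\sb$-nominality can absorb it.
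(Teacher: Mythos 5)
Your proof is correct, and it establishes the hard direction by a genuinely different decomposition than the paper, though both rest on the same underlying trick. The paper's argument is one-shot: given $m, m'$ with $m|_C = m'|_C$, it fixes a finite $\sb$-support $C'$ of $x$ (your $D$) and constructs a \emph{single} permutation $g$ with $g|_C = \id|_C$ that maps every atom of $C' \setminus C$ into the cofinite agreement set $\ker(m, m')$; then $mg|_{C'} = m'g|_{C'}$ gives $mgx = m'gx$ by the $\sb$-support, while $gx = x$ by the $\perm$-support, so $mx = m'x$ --- no induction and no least supports. You instead first reduce to the least finite $\perm$-support $S$ (via upward closure of supports and minimality of $S$), and then discharge the atoms of $D \setminus S$ one at a time through your one-atom lemma and an interpolating chain of substitutions. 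The permutation is used in dual ways: the paper moves the support atoms into the region where $m$ and $m'$ agree, whereas you move the single disagreement atom out of the $\sb$-support $D$; both exploit that elements of $\sb$ disagree at only finitely many atoms, so only finitely many relocations are ever needed. The paper's route buys brevity and independence from the least-support machinery; yours buys smaller individual steps (each a single swap with one fresh atom) and isolates, in the one-atom lemma, exactly where $\sb$-nominality enters. One caveat, which applies equally to both proofs: your claim that $S$ lies in \emph{every} $\perm$-support (and, likewise, the paper's construction of $g$) tacitly needs enough atoms outside $C$; for a cofinite set such as $C = \atoms \setminus \{a\}$ and $x = a \in \atoms$, $C$ is a $\perm$-support but not an $\sb$-support, so the lemma --- and either proof of it --- must be read as quantifying over finite (or at least co-infinite) supports $C$.
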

\begin{proof}
(Slightly different proof from the one in \cite{Gabbay07}.)
First note that one direction is easy: an $\sb$-support is also a $\perm$-support.

For the other direction, let $C$ be a $\perm$-support of $x \in X$.
Now let $C'$ be a $\sb$-support of $x$ (note that $X$ is a nominal $\sb$-set).
Assume $m, m'$ such that $m|_C = m'|_C$ and take a $g$ such that $g|_C = \id|_C$ and $g(c') \in \ker(m, m')$ for all $c' \in C' \setminus C$.
This is possible, since $C' \setminus C$ is a finite set, and there are only finitely many values where $m$ and $m'$ disagree ($\sb$ consists of finite substitutions).
Now $mg|_{C'} = m'g|_{C'}$, so $mg x = m'g x$ (remember that $C'$ is an $\sb$-support).
Using that $C$ is a $\perm$-support we know that $gx = x$ and so $mg x = m'gx \implies mx = m'x$.
This concludes that $C$ is a $\sb$-support.
\end{proof}

\begin{lemma}(Lemma~\ref{lm:sim})
We have
 $(m_1, x_1) \sim (m_2, x_2)$ iff there 
is a $g \in \perm$ such that $g x_1 = x_2$ and $m_1|_C = m_2 g|_C$, for $C$ some $\perm$-support of $x_1$. 
\end{lemma}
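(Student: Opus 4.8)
The plan is to introduce the relation $\approx$ given by the right-hand side of the statement — that is, $(m_1, x_1) \approx (m_2, x_2)$ iff there is $g \in \perm$ with $g x_1 = x_2$ and $m_1|_C = m_2 g|_C$ for some $\perm$-support $C$ of $x_1$ — and to prove $\sim\,=\,\approx$ by two inclusions. A useful preliminary normalization: restricting any witnessing equation $m_1|_C = m_2 g|_C$ to $\supp(x_1) \subseteq C$ shows that the condition already holds on the least support, while $\supp(x_1)$ is itself a support; hence $\approx$ is unchanged if we fix $C = \supp(x_1)$ throughout. I would adopt this least-support formulation from the start, as it streamlines the bookkeeping.

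For $\sim\,\subseteq\,\approx$, I would check that $\approx$ is an equivalence relation containing the two generators \eqref{eq:rel1} and \eqref{eq:rel2}; minimality of $\sim$ then gives the inclusion. Reflexivity uses $g = \id$. For symmetry of $(m_1,x_1)\approx(m_2,x_2)$ witnessed by $g$, I would take $g^{-1}$ together with the support $g\cdot\supp(x_1) = \supp(x_2)$ supplied by Lemma~\ref{lem:transfer-support}, and verify that the equation transports correctly. The generators are immediate: \eqref{eq:rel1} is witnessed by $g^{-1}$ and \eqref{eq:rel2} by $\id$.

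Transitivity is the only delicate step, and I expect it to be the main obstacle. Given $(m_1,x_1)\approx(m_2,x_2)$ via $g$ and $(m_2,x_2)\approx(m_3,x_3)$ via $h$, the natural witness is $hg$, with $hg\,x_1 = x_3$; the issue is matching the two support conditions $m_1|_{\supp(x_1)} = m_2 g|_{\supp(x_1)}$ and $m_2|_{\supp(x_2)} = m_3 h|_{\supp(x_2)}$ on one common set. Here the least-support normalization pays off: since $g\cdot\supp(x_1) = \supp(x_2)$ by Lemma~\ref{lem:transfer-support}, for $c \in \supp(x_1)$ we have $g(c)\in\supp(x_2)$, so $m_1(c) = m_2(g(c)) = m_3(h(g(c)))$, yielding $m_1|_{\supp(x_1)} = m_3\,(hg)|_{\supp(x_1)}$ directly. (Without fixing least supports one would instead work on $C \cap g^{-1}D$ and invoke that an intersection of supports is a support.)

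For the reverse inclusion $\approx\,\subseteq\,\sim$, I would exhibit an explicit two-step chain. Given a support $C$ of $x_1$ with $g x_1 = x_2$ and $m_1|_C = m_2 g|_C$, relation \eqref{eq:rel2} gives $(m_1, x_1) \sim (m_2 g, x_1)$, and relation \eqref{eq:rel1} gives $(m_2 g, x_1) \sim (m_2, g x_1) = (m_2, x_2)$; transitivity of $\sim$ then yields $(m_1, x_1)\sim(m_2,x_2)$, completing the proof.
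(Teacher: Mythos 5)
Your proof is correct and follows essentially the same route as the paper's: the easy inclusion via the two-step chain through generators \eqref{eq:rel2} and \eqref{eq:rel1}, and the harder inclusion by showing the right-hand relation is an equivalence containing both generators, with transitivity witnessed by the composite $hg$ and the identity $g \cdot \supp(x_1) = \supp(x_2)$ from Lemma~\ref{lem:transfer-support}. Your explicit least-support normalization is a point the paper uses only implicitly (its transitivity step silently replaces an arbitrary support $C$ by $\supp(x_1)$), so it is a clarification of the same argument rather than a different approach.
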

\begin{proof}
Let us define $(m_1, x_1) \sim' (m_2, x_2)$ if there exists a permutation $g \in \perm$ such that $g x_1 = x_2$ and $m_1|_C = m_2 g|_C$, for $C$ some $\perm$-support of $x_1$.
We need to prove ${\sim} = {\sim'}$.

(${\sim'} \subseteq {\sim}$)
Let $(m_1, x_1) \sim' (m_2, x_2)$, then there is a $g$ with the required properties.
Now check $(m_2, x_2) = (m_2, g x_1) \sim (m_2 g, x_1) \sim (m_1, x_1)$.

(${\sim} \subseteq {\sim'}$)
For this direction, we show that $\sim'$ is an equivalence relation which contains relations~(\ref{eq:rel1}) and~(\ref{eq:rel2}).
Proving reflexivity and symmetry is trivial.
For transitivity, assume $(m_1, x_1) \sim' (m_2, x_2)$ and $(m_2, x_2) \sim' (m_3, x_3)$.
There are $g, h$ with $g x_1 = x_2$ and $h x_2 = x_3$.
Note that $h g x_1 = x_3$.
Moreover, we have $m_1|_{\supp(x_1)} = m_2 g|_{\supp(x_1)}$ and $m_2|_{\supp(x_2)} = m_3 h|_{\supp(x_2)}$.
Observe that $\supp(x_2) = g \supp(x_1)$ so that $m_2|_{g \supp(x_1)} = m_3 h|_{g \supp(x_1)}$.
Now it follows that $m_1|_{\supp(x_1)} = m_2 g|_{\supp(x_1)} = m_2 h g|_{\supp(x_1)}$.
This proves $(m_1, x_1) \sim' (m_3, x_3)$ as required.
To see that relation~\ref{eq:rel1} is included, we take the $g$ from the relation and check the requirements.
To see that relation~\ref{eq:rel2} is included, we simply take $g = \id$ and the requirement on supports follow by assumption.
\end{proof}

\begin{theorem}(Theorem~\ref{thm:adjunction})
	The functor $F$ is left adjoint to $U$:
	$$
	\xymatrix{
		\permnom \ar@/^10pt/[rr]^F
		&
		\bot
		&
		\sbnom \ar@/^10pt/[ll]^U
	}
	$$
\end{theorem}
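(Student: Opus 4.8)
The plan is to establish the adjunction $F \dashv U$ by exhibiting, for each $\perm$-nominal set $X$, a unit map $\eta_X \colon X \to UF(X)$ satisfying the universal property: every $\perm$-equivariant $f \colon X \to U(Y)$ factors uniquely as $U(f^\sharp) \circ \eta_X$ for some $\sb$-equivariant $f^\sharp \colon F(X) \to Y$. The natural candidate for the unit is $\eta_X(x) = [\id, x]$, since the elements of $F(X)$ are built precisely as formal applications of substitutions to elements of $X$, with $[\id, x]$ representing the ``untouched'' copy of $x$. First I would check that $\eta_X$ is $\perm$-equivariant, which is a short computation using relation~(\ref{eq:rel1}): $g \cdot [\id, x] = [g, x] = [\id, gx]$.

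Next I would define the transpose. Given $f \colon X \to U(Y)$, the only sensible choice, forced by the requirement that $f^\sharp$ be $\sb$-equivariant and satisfy $f^\sharp([\id,x]) = f(x)$, is $f^\sharp([m,x]) = m \cdot f(x)$. The key step is to verify that $f^\sharp$ is \emph{well-defined} on equivalence classes, and this is where Lemma~\ref{lm:sim} and Lemma~\ref{lem:GM-support} do the real work. Given $[m_1, x_1] = [m_2, x_2]$, Lemma~\ref{lm:sim} supplies a permutation $g$ with $g x_1 = x_2$ and $m_1|_C = m_2 g|_C$ for some $\perm$-support $C$ of $x_1$. I would then compute $m_2 \cdot f(x_2) = m_2 \cdot f(g x_1) = m_2 g \cdot f(x_1)$ using $\perm$-equivariance of $f$, and finally use that $C$ supports $f(x_1)$ and hence (crucially, via Lemma~\ref{lem:GM-support}) is an \emph{$\sb$-support} of $f(x_1)$, so that $m_1|_C = m_2 g|_C$ forces $m_1 \cdot f(x_1) = m_2 g \cdot f(x_1)$. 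Chaining these equalities gives $m_1 \cdot f(x_1) = m_2 \cdot f(x_2)$.

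With well-definedness secured, the remaining obligations are routine. The factorisation $U(f^\sharp) \circ \eta_X = f$ is immediate since $f^\sharp([\id,x]) = \id \cdot f(x) = f(x)$. The $\sb$-equivariance of $f^\sharp$ follows from the action being defined by left multiplication: $f^\sharp(n \cdot [m,x]) = f^\sharp([nm,x]) = nm \cdot f(x) = n \cdot f^\sharp([m,x])$. For uniqueness, any $\sb$-equivariant $h$ with $U(h) \circ \eta_X = f$ must satisfy $h([m,x]) = h(m \cdot [\id,x]) = m \cdot h([\id,x]) = m \cdot f(x) = f^\sharp([m,x])$, so $h = f^\sharp$.

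The main obstacle, as anticipated, is the well-definedness of $f^\sharp$: this is the one place where the subtle interplay between the two generating relations of $\sim$ and the coincidence of $\sb$- and $\perm$-supports (Lemma~\ref{lem:GM-support}) is essential. The bridge from ``$C$ is a $\perm$-support of $f(x_1)$'' to ``$C$ is an $\sb$-support of $f(x_1)$'' is exactly what licenses passing from the $\perm$-equivariant hypothesis on $f$ to the $\sb$-action in the target $Y$, and without the nominality assumption built into Lemma~\ref{lem:GM-support} this step would fail. All other steps reduce to unwinding the definitions of the action and the unit.
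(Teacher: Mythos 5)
Your proposal is correct and follows essentially the same route as the paper's own proof: the same unit $\eta_X(x) = [\id,x]$, the same transpose $f^\sharp([m,x]) = m \cdot f(x)$, and the same key step where Lemma~\ref{lm:sim} together with Lemma~\ref{lem:GM-support} (upgrading the $\perm$-support of $f(x_1)$ to an $\sb$-support) establishes well-definedness, followed by the identical equivariance and uniqueness computations.
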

\begin{proof}
(Remainder of proof: equivariance and uniqueness)
	For the $\sb$-equivariance, we compute
	$n \cdot f^{\sharp}([m, x]) = n m \cdot f(x) = f^{\sharp}([nm, x]) = f^{\sharp}(n [m, x] )$.
	
	Finally, for uniqueness, suppose $h \colon FX \rightarrow Y$ is such that $U(h) \circ \eta_X = f$, i.e.,
	$h([\id,x]) = f(x)$. Then 
	$
		h([m,x]) = h(m[\id, x]) = m \cdot h([\id,x])  = m \cdot f(x) = m \cdot f^\sharp([\id,x]) = f^\sharp(m \cdot [\id,x]) = f^\sharp([m,x])
	$ as desired.
\end{proof}

\begin{lemma}\label{lem:inj-surj-iso}
	If a map in $\sbnom$ is both injective and surjective, then it is an isomorphism of nominal $\sb$-sets.
\end{lemma}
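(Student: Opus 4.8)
The plan is to prove that a map $f \colon X \to Y$ in $\sbnom$ that is both injective and surjective is an isomorphism. Since $f$ is a bijection on underlying sets, it has a set-theoretic inverse $f^{-1} \colon Y \to X$. The only thing that needs checking is that $f^{-1}$ is again a morphism in $\sbnom$, i.e.\ that it is $\sb$-equivariant (it is automatically a function between nominal $\sb$-sets, since $X$ and $Y$ already are nominal $\sb$-sets). Once equivariance of $f^{-1}$ is established, the two composites $f^{-1} \circ f = \id_X$ and $f \circ f^{-1} = \id_Y$ hold as set functions, hence as $\sb$-maps, and $f$ is an isomorphism in $\sbnom$.

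First I would fix $y \in Y$ and $m \in \sb$, and verify $f^{-1}(m \cdot y) = m \cdot f^{-1}(y)$. Writing $x = f^{-1}(y)$, so that $f(x) = y$, I compute using equivariance of $f$ that $f(m \cdot x) = m \cdot f(x) = m \cdot y$. Applying $f^{-1}$ to both sides and using that $f^{-1} \circ f = \id_X$ on the left gives $m \cdot x = f^{-1}(m \cdot y)$, that is, $m \cdot f^{-1}(y) = f^{-1}(m \cdot y)$, which is exactly the equivariance of $f^{-1}$. This is the standard argument showing that a bijective homomorphism of algebraic structures whose inverse is a function automatically has an equivariant inverse; it works here purely because $\sb$ acts by functions and no invertibility of the monoid elements is needed.

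The subtle point worth flagging is \emph{why} this is stated as a separate lemma rather than taken for granted: in an arbitrary category a map that is both monic and epic need not be an isomorphism, and injectivity/surjectivity of the underlying function is a priori stronger than being monic/epic but still does not, in a general category, guarantee an inverse morphism. The content of the lemma is precisely that in $\sbnom$ the forgetful functor to $\mathbf{Set}$ reflects isomorphisms, which follows from the concrete computation above. I do not expect a genuine obstacle here; the one place to be careful is to insist that $f^{-1}$ be shown equivariant rather than merely asserting that a bijection is an isomorphism. With that check in hand the proof is complete.
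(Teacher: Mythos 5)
Your proof is correct, but it takes a genuinely different route from the paper. You argue concretely: the set-theoretic inverse $f^{-1}$ exists because $f$ is a bijection, and its $\sb$-equivariance follows from the one-line computation $f(m \cdot f^{-1}(y)) = m \cdot f(f^{-1}(y)) = m \cdot y$, which needs no invertibility of the monoid elements --- only that $f$ is equivariant and invertible as a function. The paper instead argues abstractly: injectivity gives monic, surjectivity gives epic, the category $\sbset$ is a topos and hence balanced (every monic epic map is an isomorphism), and since $\sbnom$ is a full subcategory of $\sbset$, the resulting inverse is again a morphism of $\sbnom$ (the paper also remarks one could directly use that $\sbnom$ is itself a topos, citing Gabbay--Hofmann). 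Your argument is more elementary and self-contained: it is the standard ``bijective homomorphism of algebras is an isomorphism'' argument, valid in $\MSet$ and $\MNom$ for \emph{any} monoid $M$, with no topos theory required. The paper's argument is shorter on the page but outsources the work to nontrivial facts (presheaf categories are toposes, toposes are balanced, fullness of the subcategory). One small correction to your framing: concreteness of the category alone does not make the argument work --- what makes it work is exactly the computation you give, since there are concrete categories (e.g.\ topological spaces, posets) where continuous or monotone bijections fail to be isomorphisms; in those cases the structure is not ``algebraic'' in the relevant sense, whereas an $M$-action is.
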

\begin{proof}
	Let $f \colon X \to Y$ be an injective, surjective map between nominal $\sb$-sets.
	Since $f$ is surjective, it is epic, and
	since it is injective, it is monic.
	Every map which is epic and monic is an isomorphism in $\sbset$, because it is a topos.
	Since $\sbnom$ is a full subcategory of $\sbset$, we conclude that $f$ is an isomorphism of nominal $\sb$-sets.
	(Alternatively, we can use the fact that $\sbnom$ is a topos~\cite{GabbayH08}.)
\end{proof}

\begin{theorem}(Theorem~\ref{thm:exponent-separated})
	The sets $X \wandto U(Y)$ and $U(F(X) \fsto Y)$ are naturally isomorphic via $\phi$ as nominal $\perm$-sets.
\end{theorem}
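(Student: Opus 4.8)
The plan is to prove that $\phi \colon (X \wandto U(Y)) \to U(F(X) \sbto Y)$ is a natural isomorphism by exhibiting an explicit inverse, namely the composite $r \circ \beta \circ \alpha$ already assembled in the definition-phase of the theorem. The heavy lifting has mostly been done in setting up the three maps $\alpha$, $\beta$, $r$; what remains is the verification that these compose to an inverse of $\phi$ on both sides. So first I would spell out each of $\alpha$, $\beta$, $r$ and $\phi$ concretely on elements, unwinding the Currying and the adjunction isomorphisms from Theorem~\ref{thm:adjunction} so that I can actually compute. Since everything is equivariant and finitely supported, and $\phi$ is given to be $\perm$-equivariant, a two-sided inverse that is also $\perm$-equivariant suffices to conclude isomorphism in $\permnom$.

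The key computational steps are the two round-trips. For $(r \circ \beta \circ \alpha) \circ \phi = \id$, I would start with a separating partial function $f \in (X \wandto U(Y))$, push it through $\phi$ to the $\sb$-equivariant map $f^\sharp$ determined by $f^\sharp([m,x']) = \epsilon(F(\ev)(p^{-1}[m,(f,x')]))$, and then track it back through $\alpha$, $\beta$, $r$. The decisive point is that when I restrict the resulting total-looking map back along the inclusion $A \sepprod B \subseteq A \times B$ via $r$, I should recover exactly the separating partial function $f$: on arguments $x$ separated from $f$ the evaluation reconstructs $f(x)$ through the unit $\eta$ and the counit $\epsilon$, and on non-separated arguments $r$ discards the value so that the domain condition $f(x)$ defined iff $f \separated x$ is restored. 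For the other direction, starting from an $\sb$-equivariant $h \colon F(X) \to Y$, I would check that $\phi(r(\beta(\alpha(h)))) = h$; here the triangle identities of the adjunction $F \dashv U$ and the monoidal iso $p$ of Theorem~\ref{thm:monoidal} are what make the reconstructed map agree with $h$ on all of $F(X)$, using that every element of $F(X)$ has the form $[m,x]$ and that $m$ can be chosen to separate appropriately.

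The main obstacle I expect is the bookkeeping of supports and domains across the restriction map $r$. Unlike $\alpha$ and $\beta$, which are honest currying isomorphisms between total finitely-supported function spaces, $r \colon (B \permto C) \twoheadrightarrow (B \wandto C)$ is merely a surjection, and it is only an isomorphism when corestricted to the image of $\phi \circ (\text{something})$; so the real content is showing that $\beta \circ \alpha$ lands precisely in the copy of $X \wandto U(Y)$ that $r$ inverts, i.e.\ that the support equation $\supp(f) = \bigcup_{x \in \dom(f)} \supp(f(x)) \setminus \supp(x)$ defining separating functions is exactly reproduced. Verifying this support-minimality condition, rather than just agreement of the underlying partial functions, is where Lemma~\ref{lem:GM-support} (so that $\perm$- and $\sb$-supports coincide) and Lemma~\ref{lem:transfer-support} are indispensable. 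Finally, naturality in $X$ and $Y$ reduces to naturality of $p$, $\eta$, $\epsilon$, $\ev$ and the currying isomorphisms, each of which is already natural, so I would dispatch it by a routine diagram chase rather than recomputing on elements.
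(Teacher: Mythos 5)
Your overall skeleton is the same as the paper's: exhibit $r \circ \beta \circ \alpha$ as a two-sided inverse of $\phi$ and dispatch naturality from naturality of the components. But two points in your execution plan are genuine gaps, not just stylistic differences.

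First, for the direction $\phi \circ r \circ \beta \circ \alpha = \id$ you propose to start from ``an $\sb$-equivariant $h \colon F(X) \to Y$''. That is the wrong type of object: elements of $U(F(X) \sbto Y)$ are not $\sb$-equivariant maps $F(X) \to Y$, but finitely supported \emph{two-variable} functions $\psi \colon \sb \times F(X) \to Y$, with action $(m \cdot \psi)(n,x) = \psi(mn,x)$ and evaluation $\ev(\psi, y) = \psi(\id, y)$. The genuinely equivariant maps correspond only to the empty-support elements, and since the $\perm$-orbit of such an element is itself, verifying the identity on them says nothing about the remaining elements. This is exactly the content the theorem adds beyond Theorem~\ref{thm:adjunction} (``it extends the correspondence to all finitely supported maps and not just the equivariant ones''), so your check would silently lose the theorem. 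The paper's computation instead takes the double transpose and evaluates on pairs $(\psi, x) \in FU(F(X) \sbto Y) \times F(X)$, writes $\psi = [m, \psi_0]$ and $x = [n, x_0]$, chooses a fresh representative $x_0'$ and a common substitution $m'$ so that $p^{-1}$ can be applied, and crucially uses the two-variable $\sb$-equivariance of $\psi_0$; none of these steps has a counterpart in your plan.

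Second, what you identify as ``the real content'' --- verifying that the functions produced by $\beta \circ \alpha$ satisfy the support equation $\supp(f) = \bigcup_{x \in \dom(f)} \supp(f(x)) \setminus \supp(x)$ of separating partial functions --- is not needed at all, and pursuing it would misdirect the proof. The map $r$ is defined by transposing $(X \permto U(Y)) \sepprod X \hookrightarrow (X \permto U(Y)) \times X \xrightarrow{\ev} U(Y)$ along the $\sepprod \dashv \wandto$ adjunction, so its codomain is $X \wandto U(Y)$ by the universal property; there is nothing to check about ``landing in'' the set of separating functions, and indeed the paper never invokes the concrete characterisation of $\wandto$ anywhere in the proof (it states explicitly that this characterisation is not needed). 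Also, since $r$ is surjective but not injective, speaking of ``the copy of $X \wandto U(Y)$ that $r$ inverts'' is not meaningful. What actually must be shown is that both composites equal identities, and this is done purely equationally: uncurry and compare with evaluation maps, e.g.\ $(r \circ \beta \circ \alpha \circ \phi)^{\sharp}(\psi, x) = \psi(x) = \id^{\sharp}(\psi, x)$ on separated pairs, where the only support-related input is that $\psi \separated x$ allows computing $p^{-1}$ without renaming. Lemmas~\ref{lem:GM-support} and~\ref{lem:transfer-support} play no role there. In short: your plan checks too little where it matters (the non-equivariant elements of the $\sb$-exponent) and too much where it does not (the separating-function support equation).
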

\begin{proof}
	(Remainder of the proof: calculations of composition)
	We prove that $r \circ \beta \circ \alpha$ is the inverse of $\phi$.%
	\footnote{We use $r_{X, U(Y)} \colon (X \permto U(Y)) \twoheadrightarrow (X \wandto U(Y))$ here.}
	The easiest way to see this is to evaluate $(r \circ \beta \circ \alpha \circ \phi)^{\sharp}$ on a separated pair $(\psi, x)$ and see that this equals $\id^{\sharp}(\psi, x)$.
	(The $\sharp$ denotes (un)currying or transposition.)
	Most steps involve spelling out the definitions.
	We comment only on the interesting steps.
	
	Let $(\psi, x) \in (X \wandto U(Y)) \sepprod X$ and calculate:
	\begin{align*}
	& (r \circ \beta \circ \alpha \circ \phi)^{\sharp} (\psi, x) \\
	 & = r(\beta(\alpha(\phi(\psi))))(x) && = (\ev \circ i)(\beta(\alpha(\phi(\psi))), x) \\
	 & = \beta(\alpha(\phi(\psi)))(x)    && = (\ev \circ \id \times \eta)(\alpha(\phi(\psi)), x)   \\
	 & = \ev(\alpha(\phi(\psi)), [\id, x]) && = \alpha(\phi(\psi))([\id, x]) \\
	 & = (U(\ev) \circ q^{-1}) (\phi(\psi), [\id, x]) && = U(\ev) (\phi(\psi), [\id, x]) \\
	 & =^{1} \phi(\psi)(\id, [\id, x])  && = (\epsilon \circ F(\ev) \circ p^{-1})_{\flat}([\id, \psi])(\id, [\id, x]) \\
	 & =^{2} (\epsilon \circ F(\ev) \circ p^{-1})(\id \cdot [\id, \psi], [\id, x]) && = \epsilon(F(\ev)(p^{-1}([\id, \psi], [\id, x]))) \\
	 & =^{3} \epsilon(F(\ev)([\id, (\psi, x)]) && = \epsilon([\id, \psi(x)]) \\
	 & = \id \cdot \psi(x) && = \psi(x)
	\end{align*}
	
	So $(r \circ \beta \circ \alpha \circ \phi)^{\sharp} = \ev = \id^{\sharp}$.
	Hence $r \circ \beta \circ \alpha \circ \phi = \id$.
	There are few steps which require attention.
	Ad 1: Recall that the exponent in $\sbnom$ consists of two-variable functions and that $\ev(\psi, y) = \psi(\id, y)$.
	Ad 2: Currying in $\sbnom$ is defined by $g_{\flat} c (s, a) = g(s \cdot c, a)$.
	Ad 3: Note that $\psi \separated x$, and so to compute $p^{-1}$ we do not have to rename things to be fresh.
	
	Similarly, we prove that the composition $\phi \circ r \circ \beta \circ \alpha = \id$, by considering its (twice) transpose.
	Let $(\psi, x) \in FU(F(X) \sbto Y) \times F(X)$ and note that $\psi = [m, \psi_0], x = [n, x_0]$.
	Now calculate:
	\begin{align*}
	& (\phi \circ r \circ \beta \circ \alpha)^{\sharp\sharp} (\psi, x) \\
	 & = (\phi \circ r \circ \beta \circ \alpha)^{\sharp} (\psi) (\id, x) && = m \cdot (\phi \circ r \circ \beta \circ \alpha (\psi_0)) (\id, x) \\
	 & = (\phi \circ r \circ \beta \circ \alpha (\psi_0)) (m, x) && = \phi(r(\beta(\alpha(\psi_0)))) (m, x) \\
	 & = (\epsilon \circ F(\ev) \circ p^{-1})_{\flat} ([\id, r(\beta(\alpha(\psi_0)))]) (m, x) && = (\epsilon \circ F(\ev) \circ p^{-1}) (m \cdot [\id, r(\beta(\alpha(\psi_0)))], x) \\
	 & = (\epsilon \circ F(\ev) \circ p^{-1}) ([m, r(\beta(\alpha(\psi_0)))], [n, x_0]) && =^{1} \epsilon(F(\ev)(m', [(r(\beta(\alpha(\psi_0))), x_0')])) \\
	 & = \epsilon([m', r(\beta(\alpha(\psi_0))) (x_0')]) && = m' \cdot r(\beta(\alpha(\psi_0))) (x_0') \\
	 & =^{2} m' \cdot \beta(\alpha(\psi_0)) (x_0') && = m' \cdot (\ev \circ \id \times \eta) (\alpha(\psi_0), x_0') \\
	 & = m' \cdot \ev(\alpha(\psi), [\id, x_0']) && = m' \cdot \alpha(\psi_0) ([\id, x_0']) \\
	 & = m' \cdot \psi_0(\id, [\id, x_0']) && =^{3} \psi_0(m', [m', x_0']) \\
	 & =^{1} \psi_0(m', [n, x_0']) && = (m' \cdot \psi_0) (\id, x) \\
	 & = \epsilon(\psi)(\id, x) && = ({\ev} \circ \epsilon \times \id)(\psi, x)
	\end{align*}
	
	We conclude that $(\phi \circ r \circ \beta \circ \alpha)^{\sharp\sharp} = {\ev} \circ \epsilon \times \id = \id^{\sharp\sharp}$ and so $\phi \circ r \circ \beta \circ \alpha = \id$.
	Ad 1 (twice): To apply $p^{-1}$, we choose a fresh $x_0'$ and define $m'$ such that $[m, \psi_0] = [m', \psi_0]$ and $[n, x_0] = [m', x_0']$.
	Ad 2: Note that $x_0 \separated \psi_0$ (and so $x_0 \separated \beta(\alpha(\psi_0))$), so we are allowed to restrict with $r$.
	Ad 3: Here we use that $\psi_0$ is a (two-variable) $\sb$-equivariant map.
	
	This concludes that $r \circ \beta \circ \alpha = \phi^{-1}$.
\end{proof}

\end{document}